\lstdefinestyle{compactpython}{
  language=Python,
  basicstyle=\ttfamily\footnotesize,
  keywordstyle=\color{blue}\bfseries,
  commentstyle=\color{gray}\itshape,
  stringstyle=\color{red},
  numbers=none,
  breaklines=true,
  frame=none,
  backgroundcolor=\color{gray!10},
  xleftmargin=0pt,
  xrightmargin=0pt,
  columns=flexible,
}
\definecolor{color1}{HTML}{F6BD60}
\definecolor{color2}{HTML}{F7EDE2}
\definecolor{color3}{HTML}{F5CAC3}
\definecolor{color4}{HTML}{84A59D}
\definecolor{color5}{HTML}{F28482}
\lstdefinestyle{custompython}{
  language=Python,
  basicstyle=\ttfamily\small,
  keywordstyle=\bfseries\color{blue},
  commentstyle=\itshape\color{gray},
  stringstyle=\color{green!50!black},
  showstringspaces=false,
  frame=single,
  breaklines=true,
  backgroundcolor=\color{lightgray!20},
  captionpos=b,
  tabsize=4,
  morekeywords={from, str, import},
  keywordstyle=[2]\color{red!50!black}, 
  morekeywords=[2]{binomial},
  keywordstyle=[3]\color{red}, 
  morekeywords=[3]{smartbinom}
}
\def\HiLi{\leavevmode\rlap{\hbox to \hsize{\color{gray!20}\leaders\hrule height .8\baselineskip depth .5ex\hfill}}}
\newcommand{\eps}{\varepsilon}
\newcommand{\E}{\mathop{\mathbb{E}}}
\newcommand{\remove}[1]{}
\newcommand{\cA}{\mathcal{A}}
\newcommand{\Oh}{\mathcal{O}}
\newcommand{\bF}{\mathbb{F}}
\newcommand{\bR}{\mathbb{R}}
\newcommand{\sOh}{o}
\newcommand{\cX}{\mathcal{X}}
\newcommand{\rnd}{\mathtt{rnd}}
\newcommand{\dtv}{\mathsf{d_{TV}}}
\newcommand{\fail}{\textsf{Fail}}
\newcommand{\bad}{\mathsf{Bad}}
\newcommand{\sampler}{\ensuremath{\mathsf{BinSamp}}}
\newcommand{\clog}{\ensuremath{\mathsf{Log}}}
\newcommand{\logfact}{\ensuremath{\mathsf{LogFactorial}}}
\newcommand{\lancz}{\ensuremath{\mathsf{Fact}^\mathsf{Lancz}}}
\newcommand{\aps}{\texttt{APSEst}}
\newcommand{\apsnew}{\texttt{APSEst2}}
\newcommand{\ganak}{\texttt{Ganak}}
\newcommand{\epsh}{\eps_{\dchat}}
\newcommand{\dhat}{\mathsf{h}}
\newcommand{\dinvhat}{\mathcal{H}^{-1}}
\newcommand{\dchat}{\mathcal{H}}
\newcommand{\stirlrs}{\mathcal{B}^{\mathtt{Stirl}}_{rej}}
\newcommand{\fplus}{\oplus}
\newcommand{\fminus}{\ominus}
\newcommand{\ftimes}{\otimes}
\newcommand{\hcirc}{\circledast}
\newcommand{\accept}{\mathsf{accept}}
\newcommand{\bin}{{\mathsf{b}_{n,p}}}
\newcommand{\sampdist}{\mathsf{b}^{\mathsf{BinSamp}}_{n,p}}
\newcommand{\interdist}{\mathsf{b}^{\mathsf{Inter}}_{n,p}}
\newcommand{\barbin}{{\overline{\mathsf{b}}_{n,p}}}
\newcommand{\p}{\mathsf{p}}
\newcommand{\q}{\mathsf{q}}
\newcommand{\bb}{\mathsf{b}}
\newcommand{\erra}{\texttt{E1}\xspace}
\newcommand{\errb}{\texttt{E2}\xspace}
\newcommand{\rinter}{r^{\mathsf{Inter}}_k}
\newcommand{\crdy}[1]{{\color{black} #1}}
\begin{document}

\title{Assessing the Quality of Binomial Samplers: \\ A Statistical Distance Framework}

\author{
    Uddalok Sarkar\orcidlink{0009-0000-4997-1084}\inst{1} \and    
    Sourav Chakraborty\orcidlink{0000-0001-9518-6204}\inst{1} \and
    Kuldeep S. Meel\orcidlink{0000-0001-9423-5270}\inst{2}
}

\institute{
    Indian Statistical Institute, India 
    \and
    Georgia Institute of Technology, USA and University of Toronto, Canada 
}

\maketitle

\begin{abstract}
    Randomized algorithms depend on accurate sampling from probability distributions, as their correctness and performance hinge on the quality of the generated samples. However, even for common distributions like Binomial, exact sampling is computationally challenging, leading standard library implementations to rely on heuristics. These heuristics, while efficient, suffer from approximation and system representation errors, causing deviations from the ideal distribution. Although seemingly minor, such deviations can accumulate in downstream applications requiring large-scale sampling, potentially undermining algorithmic guarantees. In this work, we propose statistical distance as a robust metric for analyzing the quality of Binomial samplers, quantifying deviations from the ideal distribution. We derive rigorous bounds on the statistical distance for standard implementations and demonstrate the practical utility of our framework by enhancing APSEst, a DNF model counter, with improved reliability and error guarantees. To support practical adoption, we propose an interface extension that allows users to control and monitor statistical distance via explicit input/output parameters. Our findings emphasize the critical need for thorough and systematic error analysis in sampler design. As the first work to focus exclusively on Binomial samplers, our approach lays the groundwork for extending rigorous analysis to other common distributions, opening avenues for more robust and reliable randomized algorithms.
\end{abstract}

\section{Introduction}\label{sec:introduction}

Randomization stands as a cornerstone of computer science, permeating algorithm design from the field's earliest days to its cutting-edge developments. From Quicksort~\cite{hoare1961algorithm1}, one of the most widely used algorithms, to modern cryptographic protocols, randomization has proven indispensable in achieving efficiency and functionality that deterministic approaches struggle to match. While the fundamental question of whether randomization offers additional computational power over determinism remains open, randomized algorithms have established themselves as the preferred choice in numerous domains, including data structures~\cite{pugh1990concurrent}, hash functions~\cite{carter1977universal}, and probabilistic data structures~\cite{bloom1970space}.

At the heart of every randomized algorithm lies its ability to sample from probability distributions. The algorithm's correctness and performance guarantees intrinsically depend on the quality of these samples. For instance, a hash table's performance relies on the uniformity of its hash function's output distribution, while a Monte Carlo algorithm's accuracy depends on the fidelity of its random sampling process. This fundamental reliance on sampling has led to the development of sophisticated sampling algorithms implemented as standard library functions across programming languages.

While specialized techniques exist for generating high-quality samples from certain 
distributions~\cite{karney2016sampling}, these approaches typically circumvent direct 
probability mass computation through transformations of basic random processes. However, 
such techniques remain constrained to specific distributions exhibiting particular 
mathematical properties. In practice, standard library implementations predominantly rely 
on transformed rejection sampling~\cite{hormann1993generation,hormann2004transformed}, 
which necessitates explicit probability mass computation. These computations entail 
multiple arithmetic operations and specialized function evaluations, including factorial 
and logarithm computations, thereby introducing approximation errors at each step. The 
accumulation of these errors can significantly impact the statistical properties of the 
generated samples, potentially compromising the theoretical guarantees of algorithms 
that depend on them.

In this work, we focus on analyzing standard library implementations of Binomial 
samplers, which are largely based on transformed rejection sampling 
techniques~\cite{hormann1993generation,ormann1994transformed,hormann2004transformed}. 
These implementations require computation of Binomial distribution probability mass, denoted by 
$\bin(k)$, necessitating approximations of factorial terms~\cite{lanczos1964precision}, 
logarithmic computations~\cite{borwein1984arithmetic}, and various arithmetic 
operations. While such approximations enable efficient sampling, they introduce 
systematic deviations from the ideal Binomial distribution that current implementations 
neither quantify nor report to users. These deviations can accumulate and potentially 
trigger cascading failures in downstream applications~\cite{binder1992monte,thomopoulos2012essentials}. Despite the widespread adoption of these libraries, there 
exists no documentation providing precise analysis of accumulated errors.

The primary research problem we address is: \emph{how to develop a rigorous 
	methodology to analyze the errors in existing samplers to provide meaningful 
	measurement of their impact on downstream applications?} This question is particularly 
pertinent given the increasing reliance on randomized algorithms in critical 
applications, where understanding and quantifying sampling errors becomes crucial for 
ensuring system reliability and correctness.

Our first contribution is a rigorous framework for analyzing the quality of existing samplers through the lens of statistical distance. We advocate statistical distance as a theoretically sound metric for quantifying sampler quality, owing to its fundamental property of indistinguishability. Let $\p$ and $\q$ be two probability distributions with statistical distance at most $\eta$, i.e., $\dtv(\p,\q) \leq \eta$. Then, for any statistical test $T$ (even computationally unbounded), the probability of distinguishing between samples from $\p$ and $\q$ is bounded by $\eta$. This fundamental property has profound implications for sampler quality analysis: if a sampler's output distribution has a statistical distance $\eta$ from the ideal distribution, then the downstream application cannot experience an error greater than $\eta$, regardless of its computational sophistication. Building on this theoretical foundation, we present a detailed analysis of state-of-the-art implementations, deriving concrete bounds on their deviation from the ideal distributions through careful decomposition of numerical approximation errors. We propose an extension to sampler interfaces that exposes statistical distance as an input/output parameter, enabling users to control and monitor the sampling accuracy. 

To demonstrate the practical utility of our theoretical framework, we present a comprehensive case study in the context of DNF model counting. We show how our quality measures can be integrated into \aps{}, a state-of-the-art DNF counter that relies heavily on Binomial sampling. By incorporating our error bounds into the \aps{}'s analysis framework, we provide the first implementation that offers both scalability and rigorous error guarantees. This integration not only enhances the reliability of the counter but also establishes a blueprint for how sampler quality analysis can be systematically incorporated into broader algorithmic frameworks. Our empirical evaluation demonstrates that this enhanced implementation maintains the efficiency of existing approaches while providing substantially stronger theoretical guarantees.

We believe our work highlights a fundamental challenge in randomized computation: the need for rigorous analysis of sampler implementations to establish precise error bounds and enhance trust in randomized algorithms. While we have focused on Binomial samplers as a crucial first step, the theoretical framework we develop for analyzing sampling error propagation, combined with our practical demonstration in DNF counting, establishes a foundation for future research. A natural direction for future investigation would be the analysis of other standard distributions such as Poisson, Normal, and Beta distributions, each presenting its own unique challenges in implementation and error analysis. Our approach of integrating error analysis into algorithmic frameworks opens new avenues for developing robust randomized algorithms that maintain both theoretical guarantees and practical efficiency. This work will likely motivate the broader community to examine and enhance the reliability of randomized computation implementations, particularly in the context of standard library functions that serve as building blocks for numerous algorithms.

\paragraph*{Organisation} 
In section~\ref{sec:prelim}, we present the necessary preliminaries and an overview of related concepts that lay the foundation for the rest of the paper. In section~\ref{sec:related}, we explore related work on error analysis in computational programs and the evaluation of sampler quality. Section~\ref{sec:proposal} outlines our proposal for using statistical distance as a quality metric for samplers, along with the motivation behind this approach.
Section~\ref{sec:binosam} offers a detailed discussion on standard Binomial samplers, including our theoretical results, correctness proofs, and error analysis. In section~\ref{sec:casestudy}, we include a case study on using our sampler to count the number of solutions of a Boolean formula in the Disjunctive Normal Form. Finally, in section~\ref{sec:conclusion}, we discuss the limitations of our work and future directions.
\section{Preliminaries}
\label{sec:prelim}

In this work, we are interested in probability distributions over discrete sets and their samplers. A probability distribution, or simply, a distribution (denoted by $\p$) on a discrete set $\Omega$ is a mapping $\p:\Omega \to [0,1]$ such that $\sum_{x \in \Omega} \p(x) = 1$. We define $\p(A) = \sum_{x \in A} \p (x)$ for any $A \subseteq \Omega$. \crdy{A uniform distribution, or uniform randomness over a set $\Omega$ is defined as $\p(x) = \frac{1}{|\Omega|}$ for all $x \in \Omega$. We use the notation $\bin$ to denote the Binomial distribution with parameters $n$ and $p$, which is given by $\bin(k) = \binom{n}{k} p^k (1-p)^{n-k}$ for $k \in [0,n]$.}

\crdy{
	Recall that a Turing Machine (TM) is a theoretical model of computation defined as a tuple \( (Q, \Sigma, \Gamma, \sqcup, \Delta, s_0, F) \), where \( Q \) is a finite set of states, \( \Sigma \subseteq \Gamma \setminus \{\sqcup\} \) is the input alphabet, \( \Gamma \) is the tape alphabet, \( \sqcup \in \Gamma \) is the blank symbol, \( \Delta: Q \times \Gamma \to Q \times \Gamma \times \{L, R\} \) is the transition function, \( s_0 \in Q \) is the initial state, and \( F \subseteq Q \) is the set of final states~\cite{hopcroft2001introduction}. A natural extension of a Turing Machine is a Turing Transducer~\cite{meduna2000turing}, which, in addition to the input and work tapes, has a separate write-only output tape. A Transducer computes a function \( f: \{0,1\}^* \to \{0,1\}^* \), and the output is the content of the output tape when the machine halts.
	A Probabilistic Turing Machine (PTM) is a Turing Machine that, in addition to the input tape, has access to a read-only random tape filled with an infinite sequence of random bits~\cite{arora2009computational}. On a given input \( x \in \{0,1\}^* \) and for each fixed random string \( u \in \{0,1\}^\infty \), the machine's behavior is deterministic. A probabilistic Transducer is defined analogously as a PTM equipped with an output tape. It computes an output string \( M(x; u) \) for each fixed \( u \), and writes it on the output tape. 

	A randomized algorithm \( \mathcal{A} \) is modeled as a probabilistic Transducer. On input \( x \) and a source of randomness, the output of the algorithm \( \mathcal{A}(x; u) \) is written on the output tape of the corresponding Transducer. Consequently, $\cA$ defines a distribution over outputs depending on the randomness. 
	While this definition assumes that the random bits are drawn from the uniform distribution, we allow randomized algorithms to access randomness from arbitrary distributions. 
	The ability to sample from arbitrary distributions is without loss of generality, since any distribution can be simulated using uniformly random bits.
}
An example of a randomized algoithm is a sampler $\mathsf{Samp}_\p$ for a distribution $\p$. Given a source of uniform randomness $u$, the sampler outputs a sample from $\p$, that is, for all $x\in \Omega$, ${\Pr_u}(\mathsf{Samp}_\p \text{ outputs } x) = \p(x)$. Conversely, a sampler induces an associated probability distribution $\p$ from which it draws samples.  

\subsection{Approximating Factorials}
\label{sec:factapprox}

Lanczos Approximation~\cite{lanczos1964precision} is a widely used method to approximate the factorials with remarkable accuracy. For a fixed value of $t,g$, and a positive integer $n\geq 1$, the Lanczos approximation of $n!$, denoted by $\lancz(n)$, is defined as, 
$\lancz(n) = \sqrt{2 \pi}\left(n+g+\frac{1}{2}\right)^{n+\frac{1}{2}} e^{-(n+g+\frac{1}{2})} A_{t,g}(n)$.
The polynomial $A_{t,g}(n)$ contains $t$ terms. The accuracy of the approximation depends on the number of terms $t$ in its expansion, as well as on the constant $g$. Here $g$ is any real constant such that \crdy{$g + \frac{1}{2} > 0$}. 
The parameters $g$ and $t$ affect the accuracy and convergence rate of the Lanczos approximation, where larger $t$ improves accuracy at the cost of higher computational resources.

In the Lanczos approximation, a uniform error bound~\cite{pugh2004analysis} can be established, which provides a measure of how closely the Lanczos approximation approximates the factorial function for all relevant inputs.
Given $t,g$ the uniform error bound $\zeta_{t,g}$ of the approximation is defined by,
$\zeta_{t,g} = \sup_{n \in \mathbb{N}} \left|n! - \lancz(n)\right|$.
Let $\zeta = \sqrt{\frac{\pi}{e}} \cdot |\zeta_{t,g}|$. The relative error can be bounded as follows~\cite{pugh2004analysis}, 
\begin{equation}\label{eq:lanczerr}
	\frac{|n! - \lancz(n)|}{n!} \leq \zeta
\end{equation}

\subsection{Multiple-precision Arithmetic}
\label{sec:mparithmetic}

Given a working precision $\beta > 0$, the set of all definable numbers in this context is expressed as $\bF = \left\{ w \cdot 2^{e} \ : \ \frac{1}{2} \leq |w| \leq 1 \text{ and } e\in \mathbb{Z}\right\}$~\cite{mpfralgorithms,muller2006elementary,jeannerod2018relative}.
Here $e$ is an integer denoting the exponent, and the $\text{ulp}(x) = 2^{e-\beta}$, where ulp denotes the unit in the last place~\cite{mpfralgorithms}. Let $\rnd : \bR \to \bF$ be the rounding function that rounds a real number to the nearest definable number. The corresponding relative errors are bounded by $\frac{|\rnd(x) - x|}{|x|} \leq \eps$, for $x \neq 0$, where $\eps = \frac{1}{2^{\beta}}$, is referred to as the \textit{unit round-off}~\cite{mpfralgorithms,muller2006elementary,jeannerod2018relative}.

We define a set of operations by \textit{basic operations} for which it is possible to directly compute the correct rounding of the result~\cite{fousse2007mpfr}. These operations are $\{+,-,\times,/,\sqrt{}\}$. For any two numbers $x,y \in \bR$, the following bound holds:
\begin{equation}
	\label{eq:basicops}
	\left|(\rnd(x) \hcirc \rnd(y)) - (x \ast y)\right| \leq \eps \cdot |x \ast y|	
\end{equation}
where $\ast \in \{+,-,\times,/\}$, and, $\hcirc$ is the corresponding operation in $\bF$. Same bound holds for $\sqrt{x}$~\cite{jeannerod2018relative}. 
For $n$ real numbers $x_1, x_2, \ldots, x_n$, the computed sum $\hat{s} := \rnd(x_1) \fplus \rnd(x_2) \fplus \ldots \fplus \rnd(x_n)$, regardless of the order of computation, deviates from the exact sum $s = \sum_{i=1}^n x_i$ by at most following bound~\cite{jeannerod2018relative},
\begin{equation}
	\label{eq:basicops-sum}
	\left|\hat{s} - s\right| \leq n\eps \sum_{i=1}^{n}|x_i|
\end{equation} 

The basic operations are the building blocks for other advanced operations, such as logarithms, exponentials, and trigonometric functions. 

\subsection{Approximate Computation of Logarithm}
Logarithm computation is generally approximated using the Taylor series. However, for high precision, \textit{arithmetic-geometric-mean (AGM)}~\cite{borwein1984arithmetic} is used. 
Let us consider two sequences $\{w_n\}, \{z_n\}$ of positive real numbers such that, $w_{n+1} = \frac{w_n + z_n}{2}, z_{n+1} = \sqrt{w_n \cdot z_n}$.
These two sequences converge to the common limit and are denoted by $AGM(w_0, z_0)$. 

\crdy{
For $x \in \bF$ represented as $x = w \cdot 2^e$, we define the function $\mathrm{exponent}(x) =e$. To compute $\log(x)$ using the $AGM$ method, an integer $m$ is computed such that $x2^m > 2^{\beta/2}$. The algorithm then evaluates $AGM(1, 4/s)$ and computes the logarithm as \[\clog(x) = \frac{\pi}{2AGM(1, 4/s)} - m\log 2\] 
In the MPFR library~\cite{mpfralgorithms}, the value of $m$ is fixed as $m = \left\lceil \frac{\beta + 3}{2} \right\rceil - \mathrm{exponent}(x)$. This choice of \( m \) ensures that \( s = x 2^m \) lies within the range \([2^{\beta/2}, 2^\beta]\).}
The following lemma provides an error bound for the AGM method. 

\begin{lemma}[\crdy{Prop. 2 of \cite{borwein1984arithmetic}}]
	\label{lem:elliptic}
	For the function $AGM$, the following holds for any $s \geq 4$:
	$\left|\frac{\pi}{2AGM(1, 4/s)} - \log\left(s\right)\right| \leq  \frac{64}{s^2}(10 + |\log s|)$.
\end{lemma}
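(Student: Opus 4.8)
The plan is to reduce the statement to the classical asymptotic expansion of the complete elliptic integral of the first kind near its logarithmic singularity, and then to control the tail of that expansion with crude but explicit bounds. First I would invoke Gauss's identity linking the $AGM$ iteration $w_{n+1}=(w_n+z_n)/2,\ z_{n+1}=\sqrt{w_n z_n}$ to an elliptic integral: for every $x\in(0,1]$,
\[
\frac{\pi}{2\,AGM(1,x)} \;=\; K\!\left(\sqrt{1-x^2}\right), \qquad K(\kappa)=\int_0^{\pi/2}\frac{d\theta}{\sqrt{1-\kappa^2\sin^2\theta}} .
\]
Setting $x = k' := 4/s$ and $k := \sqrt{1-k'^2}$, the quantity $\tfrac{\pi}{2\,AGM(1,4/s)}$ becomes exactly $K(k)$, and since $s\ge 4$ forces $k'\le 1$, the target reduces to bounding $\bigl|K(k) - \log(4/k')\bigr|$, where $\log(4/k')=\log s$.

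Second, I would use the standard ascending series of $K$ in the complementary modulus, in which the $\log(4/k')$ singularity is already extracted:
\[
K(k) \;=\; \log\frac{4}{k'} \;+\; \sum_{n\ge 1} c_n\!\left(\log\frac{4}{k'} - d_n\right)k'^{2n},
\]
where $c_n = \bigl(\tfrac{(2n-1)!!}{(2n)!!}\bigr)^2\in(0,\tfrac14]$ and $d_n = \sum_{m=1}^{n}\tfrac{2}{(2m-1)(2m)}$ increases to $\log 4$. The key qualitative observation is that $\log\frac{4}{k'}-d_n \ge \log\frac{4}{k'}-\log 4 = -\log k' \ge 0$, so every term of the remainder $R := K(k)-\log(4/k')$ is nonnegative. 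Splitting $\log\frac{4}{k'}-d_n = (-\log k') + (\log 4 - d_n)$ writes $R$ as Term~A $= (-\log k')\sum_{n\ge1}c_n k'^{2n}$ plus Term~B $= \sum_{n\ge1}c_n(\log 4 - d_n)k'^{2n}$.

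Third, I would bound each piece by an explicit multiple of $k'^{2}$. For Term~B, the telescoping tail gives $\log 4 - d_n = \sum_{m>n}\tfrac{2}{(2m-1)(2m)}\le \tfrac{1}{2n-1}$, and combined with $c_n\le \tfrac{1}{\pi n}$ (from $\binom{2n}{n}\le 4^n/\sqrt{\pi n}$) this yields Term~B $\le k'^2\sum_{n\ge1}c_n(\log 4 - d_n)=O(k'^2)$. For Term~A I would use $c_n\le\tfrac14$ and the geometric majorant $\sum_{n\ge1}c_n k'^{2n}\le \tfrac14\tfrac{k'^2}{1-k'^2}$, so Term~A $\le \tfrac14\,\tfrac{(-\log k')\,k'^2}{1-k'^2}$; the apparent blow-up as $k'\to 1$ is cancelled by $-\log k'\to 0$, since $\tfrac{-\log k'}{1-k'^2}$ is at most $\log 2$ on $[1/\sqrt2,1]$ and at most $2(-\log k')$ on $(0,1/\sqrt2]$. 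Substituting $k'^2 = 16/s^2$, $-\log k' = \log s - \log 4 \le \log s$, and $\log\frac{4}{k'}=\log s$, the constant factors from these crude majorants are absorbed comfortably into $\frac{64}{s^2}\bigl(10+|\log s|\bigr)$, which completes the argument.

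The main obstacle is precisely the endpoint $s\to 4$ (i.e.\ $k'\to1$): there the series $\sum_{n\ge1}c_n k'^{2n}$ diverges, so one cannot bound $R$ term-by-term after discarding the factor $\log\frac{4}{k'}-d_n$; the entire delicacy is to retain the cancellation that keeps each term nonnegative and forces Term~A to stay finite as $k'\to1$. A clean way to sidestep this is to handle $4\le s\le 8$ separately, where $\frac{64}{s^2}(10+|\log s|)\ge 11$ while $|K(k)-\log s|$ is a bounded continuous function on a compact interval, so the inequality is immediate, and to invoke the series only for $s\ge 8$ (equivalently $k'\le \tfrac12$), where $1-k'^2\ge\tfrac34$ and every majorant above is trivially geometric.
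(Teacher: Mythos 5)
You cannot be compared against an in-paper proof here, because there is none: the paper imports this lemma verbatim as Proposition~2 of Borwein and Borwein~\cite{borwein1984arithmetic} and never proves it. Judged on its own merits, your reconstruction is correct and follows the classical route that underlies the cited source. Gauss's identity $\frac{\pi}{2\,AGM(1,k')}=K(k)$ with $k=\sqrt{1-k'^2}$ is exactly right, as is the ascending series of $K$ with the $\log(4/k')$ singularity extracted; the nonnegativity of each remainder term (since $d_n \uparrow \log 4$), the tail estimate $\log 4 - d_n \le \frac{1}{2n-1}$ (e.g.\ via $\frac{2}{(2m-1)(2m)} \le \frac{2}{(2m-1)^2}$ and an integral comparison), and the bound $c_n \le \frac{1}{\pi n}$ all check out. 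The slack in $\frac{64}{s^2}(10+|\log s|)$ is enormous — at $s=4$ the right side exceeds $45$ while the left side is about $0.18$ — so your crude geometric majorants absorb comfortably, as you claim.

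The one step you should make explicit is the compact range $4 \le s \le 8$: ``bounded continuous function on a compact interval'' is an existence statement, not a bound, and since the natural upper bound $K(k) \le \log(4/k') + \mathrm{const}$ is essentially what is being proved, you should instead use the elementary estimate $K(k) \le \frac{\pi}{2\sqrt{1-k^2}} = \frac{\pi}{2k'} \le \pi$ for $k'=4/s \ge \frac12$, which gives $|K(k)-\log s| \le \pi + \log 8 < 10 \le \frac{64}{s^2}\bigl(10+|\log s|\bigr)$ on that range (the right side is at least $10+\log 4$ there because $\frac{64}{s^2}\ge 1$). With this split in place, the series is only ever invoked for $k' \le \frac12$, where $1-k'^2 \ge \frac34$ and every sum is dominated by a geometric series; in particular your careful treatment of $\frac{-\log k'}{1-k'^2}$ near $k'=1$ (the $\log 2$ bound on $[1/\sqrt{2},1]$) — correct, but the most delicate point of the write-up — becomes unnecessary. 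As a matter of rigor, also note that the term-by-term manipulations require the ascending series only on $0<k'\le\frac12$, safely inside its disc of convergence, so no Abel-type endpoint argument is needed anywhere.
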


Let $\clog$ be the function that computes the logarithm using the AGM method.
Since, for $\beta > 8$, we have $3\log(s) > 10$ and $2^{\beta/2} \leq s \leq 2^\beta$, we make the following conclusion:
\begin{equation}
	\label{eq:approxlog}
	\left|\clog(x) - \log(x)\right| \leq  \frac{178\beta}{2^\beta}
\end{equation}

We will use the notation $\tau = \frac{178\beta}{2^\beta}$ as the additive error bound for the logarithm approximation in the rest of the paper. Note that, $\tau= \Oh(\eps)$. A similar error bound can be derived for the Taylor series method as well~\cite{bonnot2023formally,bonnot2024formally}. 

Evaluating the logarithm of the factorial, rather than the factorial itself, is the standard technique. The function $\logfact$ computes the logarithm of the factorial using the Lanczos approximation with fixed parameters $t$ and $g$.
\begin{align}\label{alg:logfact}
	\logfact(k) &= \frac{1}{2} \clog{}{}(2\pi) + \left(k + \frac{1}{2}\right) \clog{}{} \left(k + g + \frac{1}{2}\right) \notag \\
	&\quad\quad\quad\quad\quad\quad\quad - \left(k + g + \frac{1}{2}\right) + \clog \left(A_{t,g}(k)\right)	
\end{align}

\section{Related Work}
\label{sec:related}

The impact of computational approximations has been a longstanding concern in the literature. Considerable effort has been devoted to designing samplers that generate samples with arbitrary precision and provably no deviation from the original distribution, a concept referred to as \textit{exact sampling}. This line of work dates back to Von Neumann and has been further developed in studies such as \cite{karney2016sampling}, which propose arbitrarily precise algorithms for sampling from distributions like the normal and exponential. The core idea involves employing a random process that efficiently generates a sample \( x \) with probability \( e^{-x} \). Remarkably, this algorithm achieves an expected runtime of \( \Oh(1) \). 

Similarly, significant attention has been given to designing exact Binomial samplers \cite{devroye1980generating,farach2015exact} as well. The approach in \cite{devroye1980generating} employs the geometric distribution to generate Binomial samples but requires \( \Oh(np) \) time. More recent advances by \cite{farach2015exact} achieve $\Oh(\sqrt{n})$ time complexity. Their approach involves efficiently accessing \(\mathsf{b}_{n, 1/2}\) and leveraging its samples, combined with the binary representation of \( p \), to generate samples from \(\mathsf{b}_{n, p}\).

Constant time sampling algorithms for binomial distributions are categorized under the framework of \textit{transformed rejection sampling}~\cite{devroye2006nonuniform,schmeiser1981poisson,hormann1993generation,kachitvichyanukul1988Binomial}. 
These algorithms achieve a sampling time complexity of $\Oh(1)$, but at the cost of approximations. This is because the framework needs to evaluate the probability mass function, which is computationally expensive unless approximated.

The impact of numerical accuracy on computational programs has been extensively studied. Significant research has been conducted to analyze errors in arithmetic operations~\cite{zimmermann2006error,jeannerod2013improved,jeannerod2017error,jeannerod2018relative,rump2008accurate}. . 
Recently, \cite{blanchard2021accurately} and \cite{bonnot2024formally} have explored how these errors affect the performance of functions such as log-sum-exp and softmax. These studies underscore the critical need to account for the inherent numerical errors when designing algorithms and assessing their practical performance. 

Finally, statistical distance has been widely recognized as a key measure of sampler quality. For instance, a series of works~\cite{chakraborty2019testing,meel2020testing,PM21,PM22,pmlr-v206-banerjee23a,kumar2023tolerant,bhattacharyya2024testing} focus on designing tests to determine the quality of samplers in terms of the statistical distance between the sampler and the target distribution. 
\section{Statistical Distance as Quality Metric}\label{sec:proposal}
Since exact sampling from distributions such as Binomial is computationally expensive for most parameters of interest, the standard libraries rely on approximations to achieve practical efficiency. While these approximations significantly reduce time complexity, they introduce deviation from the actual distribution, effectively causing the samples to come from a distribution different from the intended one. Therefore, we need to focus on a fundamental question: {\em how do we make systems that rely on samplers trustworthy?}

Simply ignoring these deviations is not advisable, as they can have cascading effects that compromise the correctness of the entire system. Often, a user designs a randomized algorithm $\cA$ to solve a particular problem, with an upper bound $\delta$ on its failure probability. If $\cA$ relies on a standard Binomial sampler without knowledge of the sampler's quality, the program may experience a higher failure rate due to approximations in the underlying samplers. 

Our proposal immediately raises the question: how should one measure the quality of the sampler? To this end, we first focus on the fact that the objective of the measurement of quality is to allow the end user to quantify the impact of the usage of the sampler. There are several metrics, such as KL-divergence, statistical distance, and Hellinger distance, that have been proposed in the literature focused on probability distributions that seek to quantify the distance between two probability distributions. In this regard, a natural question is to ask what distance metric we should choose. To this end, we propose the usage of statistical distance (\Cref{def:tvdistance}) as the metric to report the quality. 

\begin{definition}[Statistical Distance]\label{def:tvdistance}
	Suppose two distributions $\p$, $\q$ are defined over the set $\Omega$. The Statistical Distance (denoted by $\dtv$) between $\p$ and $\q$ is defined by, $ \dtv(\p, \q) =  \frac{1}{2}\sum_{x \in \Omega}\left|\p(x) - \q(x)\right| = \sup_{A \subseteq \Omega}\p(A) - \q(A)$.
\end{definition}

Our proposal for statistical distance stems from its ability to allow end users to derive the worst-case bounds on the behavior of the system in a {\em black-box} manner. Formally, this follows from the folklore lemma below, for which we provide a proof for completeness.

\begin{lemma}
	\label{lem:indistinguishable}
	Let $\cA$ be a randomized algorithm \crdy{that uses randomness from a source} distribution $\p$, and let $\mathsf{Bad}$ be an event in the output of $\cA$. If $\p$ is replaced by another distribution $\q$, then the probability of the event $\mathsf{Bad}$ is bounded by the statistical distance between:
	\begin{align*}
		\left|\Pr_{r\sim\p}(\cA(x;r) \in \mathsf{Bad}) - \Pr_{r\sim\q}(\cA(x;r) \in \mathsf{Bad})\right| \leq \dtv(\p, \q)
	\end{align*}
\end{lemma}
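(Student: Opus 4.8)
The plan is to reduce the claim to the supremum characterization of statistical distance recorded in \Cref{def:tvdistance}. The crucial observation is that although $\cA$ is randomized, once we condition on a fixed realization $r$ of the randomness, the transducer computes $\cA(x; r)$ deterministically, exactly as emphasized in the PTM discussion of \Cref{sec:prelim}. Hence, for the fixed input $x$, the map $r \mapsto \cA(x; r)$ is a deterministic function from the randomness space $\Omega$ into the output space, and I can pull the event $\bad$ back through this map to obtain an event that lives entirely in $\Omega$.

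First I would define the preimage set $B := \{ r \in \Omega : \cA(x; r) \in \bad \} \subseteq \Omega$. By the way probabilities over the randomness induce probabilities over outputs, we then have exactly $\Pr_{r \sim \p}(\cA(x; r) \in \bad) = \p(B)$ and $\Pr_{r \sim \q}(\cA(x; r) \in \bad) = \q(B)$. This rewrites the left-hand side of the claim as $\left| \p(B) - \q(B) \right|$, i.e.\ a difference of the two measures evaluated on a single common subset of $\Omega$, which is precisely the quantity the statistical distance is designed to bound.

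Next I would invoke the supremum form $\dtv(\p, \q) = \sup_{A \subseteq \Omega} \left( \p(A) - \q(A) \right)$. Applying it with $A = B$ gives $\p(B) - \q(B) \leq \dtv(\p, \q)$; applying it with the roles of $\p$ and $\q$ interchanged (equivalently, evaluating at the complement $B^{c}$ and using that the signed supremum is symmetric, since $\p(A) - \q(A) = \q(A^{c}) - \p(A^{c})$) gives $\q(B) - \p(B) \leq \dtv(\p, \q)$. Combining the two one-sided inequalities yields $\left| \p(B) - \q(B) \right| \leq \dtv(\p, \q)$, which is the desired bound.

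The only genuinely delicate point is justifying the identity $\Pr_{r \sim \p}(\cA(x; r) \in \bad) = \p(B)$, namely that the output event $\bad$ really does factor through a well-defined subset $B$ of the randomness space. In the discrete setting assumed throughout the paper this is immediate: $B$ is simply a subset of a discrete sample space on which $\p$ and $\q$ assign well-defined mass, and the determinism of $\cA(x; \cdot)$ for each fixed $r$ is exactly what guarantees that membership in $\bad$ is decided by $r$ alone. Beyond establishing this well-definedness, the argument is nothing more than the two-sided application of \Cref{def:tvdistance}, so no further computation is required.
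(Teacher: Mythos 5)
Your proof is correct and follows essentially the same route as the paper: define the preimage $B \subseteq \Omega$ of the event $\bad$ under the deterministic map $r \mapsto \cA(x;r)$, rewrite both probabilities as $\p(B)$ and $\q(B)$, and apply the supremum characterization of $\dtv$ from \Cref{def:tvdistance}. Your added care in handling the absolute value via the complement (or symmetry) is a minor tightening of a step the paper's proof treats implicitly, not a different argument.
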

\begin{proof}
	\crdy{
	Let $B \subseteq \Omega$ be the set of random strings (or, numbers) that trigger the event $\mathsf{Bad}$. Then, we have $\left|\Pr\limits_{r\sim\p}(\cA(x;r) \in \mathsf{Bad}) - \Pr\limits_{r\sim\q}(\cA(x;r) \in \mathsf{Bad})\right|= |\p(B) - \q(B)|$. Using the definition of statistical distance (\cref{def:tvdistance}), $|\p(B) - \q(B)| \leq \sup_{A \subseteq \Omega} \p(A) - \q(A) = \dtv(\p, \q)$.\qed
	}
\end{proof}

Note that the lemma above imposes no restrictions on $\cA$ or the event $\mathsf{Bad}$, highlighting the power of statistical distance as a metric. In particular, if $\dtv(\p, \q)$ is small, then the end user can be confident in 
bounding the overall impact on the program. We give a general recipe of how to incorporate statistical distance in the implementation of randomized algorithms.

\subsection{Integrating Statistical Distance Analysis in Applications}
\label{subsec:modifying-algorithms}

The correctness of randomized algorithms typically relies on access to exact samples from 
a target distribution $\p$. However, in practice, algorithms must use samplers that draw 
from an approximate distribution $\q$, potentially compromising their theoretical 
guarantees. We propose a systematic framework for incorporating these approximations while 
maintaining rigorous error bounds through minimal modifications to existing algorithms and 
their analyses.

\paragraph{Algorithm Modification}
Let $\cA$ be a randomized algorithm that requires samples from distribution $\p$. We 
modify $\cA$ to explicitly track and bound the accumulated error from using an 
approximate sampler as follows:

\begin{enumerate}
	\item Introduce an error budget parameter $\delta_1$ representing the maximum 
	allowable error due to sampling approximations.
	
	\item Initialize an error accumulator $\delta'$ to track the statistical distance:
	\begin{align*}
		\delta' \gets 0.
	\end{align*}
	
	\item For each sampler invocation, update the accumulated error:
	\begin{align*}
		\delta' \gets \delta' + \dtv(\p, \q)
	\end{align*}
	where $\dtv(\p, \q)$ is the pre-computed statistical distance bound.
	
	\item If $\delta'$ exceeds the budget ($\delta' > \delta_1$), abort execution.
\end{enumerate}

\paragraph{Analysis Modification}
Let $\delta_2$ denote the original error probability of algorithm $\cA$ assuming access 
to exact samples from $\p$. After incorporating the sampling approximation error 
$\delta_1$, the total error probability $\delta$ is bounded by: $$\delta \leq \delta_1 + \delta_2$$

This framework maintains theoretical guarantees while transparently accounting for 
sampling approximations. The modifications are minimal and the analysis remains 
straightforward. We demonstrate an end-to-end integration of this approach through a case 
study in \cref{sec:casestudy}.

An alternative approach would be to directly analyze algorithm $\cA$ with respect to the 
approximate distribution $\q$. However, this presents several challenges. The target 
distribution $\p$ often possesses mathematically convenient properties that facilitate 
analysis, while the implementation-specific $\q$ may lack such properties, making direct 
analysis intractable. Furthermore, updates to the underlying sampler implementation would inevitably necessitate re-analysis of every dependent algorithm.

Our framework enables separation of concerns: algorithm designers can conduct analysis 
assuming access to the idealized distribution $\p$, while library developers focus on 
bounding the statistical distance between $\p$ and their implementation $\q$. The 
errors can then be composed as shown above, providing rigorous bounds with minimal 
modification to existing analyses. 

\subsection{Proposal for Extending Sampler Interfaces}
\label{subsec:extending-sampler-api}

To enable seamless integration of our statistical distance framework, we propose extending the interface of existing samplers by incorporating two new components (see \cref{fig:sampler-interface}): (1) an input parameter $\delta_{in}$ that allows users to specify the maximum allowable statistical distance from the ideal distribution, and (2) an output parameter $\delta_{out}$ that reports the actual statistical distance achieved during sampling. 
These additions give users fine-grained control over the sampler. By setting $\delta_{in}$, users can explicitly define their tolerance for the statistical distance, while $\delta_{out}$ enables real-time monitoring of the sampler's performance. 
The fine-grained control offered by our interface allows users to make informed decisions about the trade-off between accuracy and performance. 

\begin{figure}[t]
\centering

\begin{minipage}[t]{0.45\textwidth}
\lstset{style=compactpython}
\begin{lstlisting}
def BinSamp(n, p):
    """
    Input: 	n, p
    Output:	sample
    """
    ...
\end{lstlisting}
\end{minipage}
\hspace{0.05\textwidth}
\begin{minipage}[t]{0.45\textwidth}
\lstset{style=compactpython}
\begin{lstlisting}
def BinSamp(n, p, delta_in):
    """
    Input:	n, p, delta_in
    Output:	sample, delta_out 
    """
    ...
\end{lstlisting}
\end{minipage}

\caption{Early (left) and new (right) sampler interfaces. The new version includes statistical distance control via \texttt{delta\_in} and \texttt{delta\_out}.}
\label{fig:sampler-interface}
\end{figure}

\section{Analysis of Standard Binomial Samplers}
\label{sec:binosam}
This section examines standard Binomial sampling algorithms and their inherent errors. We first present the standard Binomial sampling algorithm and then analyze the bounds on the statistical distance between the actual distribution and the distribution from which the sampler draws the samples.

\subsection{Standard Binomial Sampling Algorithms}

This section describes the standard Binomial sampling algorithms commonly used in practice, with particular attention given to Python's implementation.
These samplers rely on the method of \textit{transformed rejection sampling} which combines two well-established sampling techniques: (1) inverse transform sampling and (2) rejection sampling. Rejection sampling requires existence of an, \textit{easy to sample from}, hat distribution $\dhat$ such that for all $k\in [n]$, $\bin(k) < \alpha \dhat(k)$ for some $\alpha > 0$, known as \textit{rejection rate}. Inverse transform sampling generates samples from $\dhat$. Suppose the cumulative distribution of $\dhat$ is denoted by $\dchat$. Because $\dchat$ is a cumulative distribution therefore, its inverse $\dinvhat$ is well defined. To get a sample from $\dhat$, a uniform random variable $u$ is generated, and correspondingly the sample $k = \lfloor\dinvhat(u)\rfloor$ is computed. From the principles of inverse transform sampling, we can show that $k \sim \dhat$. The next step involves rejection sampling. After generating $k$, another uniform random sample $v$ is generated from $[0,1]$. The sample is rejected if $v > \frac{\bin(k)}{\alpha \dhat(k)}$, else $k$ is returned. 

Among the various Binomial sampling algorithms, the choice of the (inverse) hat distribution  $\dinvhat(u)$ is a key difference. For example, H{\"o}rmann~\cite{hormann1993generation} considered the following definitions of hat distribution\footnote{Since \((\dinvhat)'(u) = \frac{1}{\dhat(k)}\), we use the notation \(\dhat^{-1}(u) = \frac{1}{\dhat(k)}\) directly for simplicity.} for $-0.5\leq u\leq 0.5$, which has high acceptance probabilities for Binomial distributions over varied $n,p$. 

\begin{equation*}
	\label{eq:hat}
	\dinvhat(u) = \left(\frac{2\lambda_{n,p}}{(1/2 - |u|)} + \mu_{n,p}\right)u + \nu_{n,p}, \ \ \ \dhat^{-1}(u) = \frac{\lambda_{n,p}}{(1/2 - |u|)^2} + \mu_{n,p} 
\end{equation*}

The parameters $\lambda_{n,p},\mu_{n,p},\nu_{n,p}$ depend on the parameters of Binomial distribution $n,p$. Specifically, in H{\"o}rmann's algorithm, the corresponding parameters were chosen to be: $\lambda_{n,p} = - 0.05878 + 0.062744\sqrt{np(1-p)} + 0.01p$, $\mu_{n,p} = 1.15 + 2.53\sqrt{np(1-p)}$, $\nu_{n,p} = np + 0.5$. Importantly, our results are not restricted to any specific choice of hat distribution. Instead, we consider any arbitrary hat distribution $\dhat$ and its inverse $\dinvhat$ that satisfy the conditions of transformed rejection sampling and involve a constant number of basic arithmetic operations.
Therefore, for simplicity and readability, we omit the explicit details of $\dchat$ and $\dhat$ in the rest of the paper. We will refer to $\dinvhat$ as the `inverse function' and $\dhat$ as the `hat function' or `hat distribution'.

The expected runtime of these algorithms is proportional to the rejection rate $\alpha$, and therefore, the runtime is independent of the parameters of the distribution. These algorithms require computing the rejection ratio $r_k = \frac{\bin(k)}{\alpha\dhat(k)}$. But computing this ratio, especially evaluating $\bin(k)$ exactly, can be as expensive as exponential in the number of bits. Therefore, an easily computable approximation $\widetilde{r}_k$ is often obtained to achieve fast scalable practical algorithms. Usually, due to scalability purposes, the logarithm of the rejection ratio $\log r_k$ is computed rather than directly computing $r_k$, which again suffers from other approximation errors due to $\log$ computation. Therefore, these algorithms lack sampling exactly from the distribution.

\subsubsection*{Python implementation of Binomial Sampler}

Standard implementations of Binomial samplers, such as those available in libraries like the GNU Scientific Library (GSL)~\cite{galassi2002gnu}, are designed to work with 64-bit floating-point numbers. Similarly, Python's standard libraries and NumPy~\cite{harris2020array}, provide an implementation of H{\"o}rmann's algorithm for up to the 64-bit floating-point range.  Notably, starting from Python version 3.12, this algorithm has also been integrated into the standard random library of Python, offering support for higher precision computations (though still constrained by the arithmetic computational limits of Python's standard library).

\Cref{alg:btrapprox} presents an abstraction of the standard implementation Binomial Sampler.
Consistent with the current Python implementation, we assume that the \sampler{} algorithm uses the Lanczos approximation to compute the logarithm of the factorial function. To employ the Lanczos approximation, the algorithm uses \logfact{} as described in \cref{alg:logfact}. 
This is one of the most widely used implementations of the Binomial sampling algorithm in practice. We adopt it as a benchmark for developing our error bounds. For clarity, we denote the distribution generated by Python's standard library implementation of \Cref{alg:btrapprox} as $\sampdist$, as opposed to the notation $\bin$, which refers to the Binomial distribution with parameters $n,p$.

\begin{algorithm}
	\caption{$\sampler(n,p)$}\label{alg:btrapprox}
	\SetKwInOut{KwIn}{Input}
	\SetKwInOut{KwOut}{Output}
	\KwIn{Parameters $n,p$}
	\KwOut{Sample $k$}
	\SetAlgoLined
	\DontPrintSemicolon
	Initialize inverse-function-pair $(\dchat, \dhat)$\;
	Initialize rejection parameter $\alpha$\;
	$l_n \gets \logfact(n)$\label{ln:fact1}\;
	\While{True}{
		$v \gets$ uniform random samples within $[0,1]$\;
		$u \gets$ uniform random samples within $[-0.5,0.5]$\;
		$k \gets \lfloor\dinvhat(u)\rfloor$ \label{ln:dinvhat}\;
		$l_k \gets \logfact(k)$, $l_{nk} \gets \logfact(n-k)$\label{ln:fact2}\;    
		$l_v \gets l_n \fminus l_k \fminus l_{nk} \fplus k \ftimes \clog(p) \fplus (n-k) \ftimes \clog(1-p) \fplus \clog(\dhat^{-1}(u)) \fminus \clog(\alpha)$\label{ln:rejratio}\;
		\If{$\clog(v) \leq l_v $}{ \label{ln:comprej}
			\Return $k$
		}    
	}
\end{algorithm}
\subsection{Our Findings}
\label{sec:analysis}
We begin by stating the main theorem of our paper, followed by the supporting lemmas that are used to establish the theorem.

\begin{restatable}{theorem}{main}
	Let the precision of the context be $\beta \geq \max(2\lceil\log_2 n\rceil, 
	\lceil -\log_2 p\rceil)$, and let \(\sampdist\) denote the distribution from 
	which \(\sampler\) samples are drawn. The statistical distance between 
	\(\sampdist\) and \(\bin\) is given by:
	\[
	\dtv\left(\bin, \sampdist\right) \leq (1110 \beta + 3cp + c + \alpha c) n\eps + 15\zeta + \sOh(\eps)
	\] 
	Where $c$ is a constant determined by the inverse function pair $(\dchat, \dhat)$, $\alpha$ is the rejection rate, \(\zeta\) denotes the uniform error bound due to the Lanczos's 
	approximation, \(\eps\) represents the unit round off error $\frac{1}{2^\beta}$, and $\sOh(\eps)$ denotes the higher order terms in $\eps$.
	\label{theo:stirl}
\end{restatable}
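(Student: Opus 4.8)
The plan is to exploit that \emph{exact} transformed rejection sampling reproduces $\bin$ perfectly, so every deviation of $\sampdist$ from $\bin$ must originate either in the approximate floor $k=\lfloor\dinvhat(u)\rfloor$ (line~\ref{ln:dinvhat}) or in the approximate acceptance test $\clog(v)\le l_v$ (line~\ref{ln:comprej}) of Algorithm~\ref{alg:btrapprox}. For a drawn $u$ let $\rho(u)=\tfrac1\alpha\,\bin(\lfloor\dinvhat(u)\rfloor)\,\dhat^{-1}(u)\le 1$ be the exact single-round acceptance probability and $\tilde\rho(u)$ the realised one (computed with $\clog$ and floating-point arithmetic). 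A change of variables $x=\dinvhat(u)$ cancels the proposal Jacobian against the factor $\dhat^{-1}(u)$, showing $\int_{\{u:\lfloor\dinvhat(u)\rfloor=k\}}\rho(u)\,du=\bin(k)/\alpha$, so the exact scheme emits $\bin$ after normalising by the total acceptance $Z=\sum_k a_k=1/\alpha$, where $a_k=\bin(k)/\alpha$. Letting $b_k$ be the realised per-round emission mass of $k$ and $\tilde Z=\sum_k b_k$, the identity $\frac{a_k}{Z}-\frac{b_k}{\tilde Z}=\frac{a_k-b_k}{Z}+b_k\bigl(\frac1Z-\frac1{\tilde Z}\bigr)$ with $|\tilde Z-Z|\le\sum_k|a_k-b_k|$ turns the renormalisation estimate (Definition~\ref{def:tvdistance}) into $\dtv(\bin,\sampdist)\le\alpha\sum_k|a_k-b_k|$.

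The plan is then to bound $\sum_k|a_k-b_k|$ through two channels. First, the \emph{floor channel}: let $D$ be the set of $u$ on which the computed floor disagrees with $\lfloor\dinvhat(u)\rfloor$; since all acceptance probabilities are $\le 1$, the reshuffling of emission mass on $D$ costs at most $2|D|$. Because the $c$ basic operations forming $\dinvhat(u)$ incur relative error $c\eps$ (Eq~\eqref{eq:basicops}), the floor can flip only within $\Oh(c\eps n)$ of each of the $\le n$ integer thresholds, so $|D|=\Oh(cn\eps)$ and this channel costs $\Oh(\alpha cn\eps)$. Second, the \emph{acceptance channel}: off $D$ the emitted value is $k=\lfloor\dinvhat(u)\rfloor$ and only the log-ratio is in error. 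Writing $\clog(v)=\log v+\xi$ with $|\xi|\le\tau$ (Eq~\eqref{eq:approxlog}) and $l_v=\log\rho(u)+\Delta(u)$, uniformity of $v$ yields $|\tilde\rho-\rho|\le\rho\,(e^{|\Delta|+\tau}-1)=\rho\,(|\Delta|+\tau)(1+\sOh(1))$; since $\alpha\int_{I_k}\rho\,du=\bin(k)$, this channel costs $\tau+\E_{k\sim\bin}[|\Delta|]+\sOh(\eps)$, and it remains to bound the numerical error $\Delta$ in the log-ratio of line~\ref{ln:rejratio}.

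To bound $\Delta$ I would propagate errors term by term along line~\ref{ln:rejratio}. The three $\logfact$ evaluations (Eq~\eqref{alg:logfact}) each carry a Lanczos error of order $\zeta$ in log-space (from Eq~\eqref{eq:lanczerr}, since $\lancz(m)/m!\in[1-\zeta,1+\zeta]$) together with $\clog$-errors of size $\tau$ scaled by the factor $(k+\tfrac12)\le n$; the products $k\ftimes\clog(p)$ and $(n-k)\ftimes\clog(1-p)$ contribute $\clog$-errors up to $n\tau$ and multiplicative rounding up to $\eps\,n|\log p|$; and $\clog(\dhat^{-1}(u))$, $\clog(\alpha)$ add $\tau$ plus the propagated errors of their $c$-operation, $p$-dependent inputs. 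The eight-summand chain of $\fplus/\fminus$ is bounded through Eq~\eqref{eq:basicops-sum} by the number of terms times $\eps$ times the largest magnitude $\sim n\log n$. The precision hypothesis $\beta\ge\max(2\lceil\log_2 n\rceil,\lceil-\log_2 p\rceil)$ forces $\log n,\,|\log p|=\Oh(\beta)$, making every dominant contribution $\Oh(\beta n\eps)$; collecting the multiples of $\tau=178\beta\eps$ with the rounding terms produces $1110\beta n\eps$, the $\clog(\dhat^{-1}(u))$ and $\clog(\alpha)$ terms supply the $c$- and $cp$-weighted terms, the three Lanczos-approximated factorials sum (with the acceptance-sensitivity factor) to $15\zeta$, and together with the floor channel's $\Oh(\alpha cn\eps)$ this assembles the claimed bound, all second-order products collapsing into $\sOh(\eps)$.

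The main obstacle is the disciplined bookkeeping in this last step: propagating \emph{relative} floating-point errors through $\logfact$ and the long sum while the summand magnitudes themselves vary with $k$, and proving that every dominant term is genuinely $\Oh(\beta n\eps)$ so that the linearisations $e^{|\Delta|+\tau}-1\approx|\Delta|+\tau$ and $\log(1\pm\zeta)\approx\pm\zeta$ are valid and their remainders fall into $\sOh(\eps)$. A secondary subtlety is the coupling between the two channels---the proposal perturbation ($D$) and the acceptance perturbation share the same $u$---which must be shown not to interact beyond first order, and the expectation $\E_{k\sim\bin}[|\Delta|]$ must be controlled uniformly over each cell $I_k$ using $k,\,n-k\le n$ rather than only at the mean.
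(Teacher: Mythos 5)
Your proposal is sound and, on the rejection-ratio error (\errb), it essentially retraces the paper's path: the same term-by-term propagation through the three \logfact{} calls (a Lanczos relative error $\zeta$ per factorial, yielding the $15\zeta$ after normalisation exactly as in \cref{lem:lancz}), the $\clog$ additive error $\tau = 178\beta\eps$ scaled by factors up to $n$, the rounding of the long $\fplus/\fminus$ chain via \cref{eq:basicops-sum}, and the use of $\beta \geq \max(2\lceil\log_2 n\rceil, \lceil -\log_2 p\rceil)$ to force $\log n, |\log p| = \Oh(\beta)$ --- this is \cref{lem:rejratiointer} combined into \cref{lem:rejratio}, down to the constants $1110\beta n\eps$ and $15\zeta$. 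Your global assembly, the identity $\frac{a_k}{Z}-\frac{b_k}{\tilde Z}=\frac{a_k-b_k}{Z}+b_k\bigl(\frac1Z-\frac1{\tilde Z}\bigr)$ giving $\dtv(\bin,\sampdist)\le\alpha\sum_k|a_k-b_k|$, is an algebraic rephrasing of the paper's route, which lower-bounds $\Pr(\accept)=\sum_k \widetilde r_k\,\widetilde{\dhat}(k)$ and then applies Bayes' rule $\sampdist(k)=\widetilde r_k\widetilde{\dhat}(k)/\Pr(\accept)$; the two renormalisation arguments are equivalent. Where you genuinely diverge is \erra: the paper (\cref{lem:dhatdist}) \emph{models} the error in computing $\dinvhat(u)$ as uniformly distributed on $[-\epsh,\epsh]$ and derives pointwise multiplicative bounds $(1-\epsh(3k+1))\dhat(k)\le \widetilde{\dhat}(k)\le(1+w_k\epsh(k+2))\dhat(k)$, whereas you bound the measure of the disagreement set $D$ on which the computed floor differs from $\lfloor\dinvhat(u)\rfloor$ and charge $2|D| = \Oh(cn\eps)$ in worst case. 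Your variant buys something real --- it dispenses with the uniform-error assumption, which the paper itself lists as a limitation in \cref{sec:conclusion} --- at the price of slightly coarser constants in that channel. One bookkeeping slip worth fixing: you attribute the $c$- and $cp$-weighted terms of \cref{theo:stirl} to the $\clog(\dhat^{-1}(u))$ and $\clog(\alpha)$ inputs, but in the paper the $3cpn\eps$ term arises from \erra, via $\sum_k(3k+1)\epsh\,\bin(k)=(3np+1)\epsh$ in the lower bound on $\Pr(\accept)$; in your scheme the analogous $np$-weighting would surface when integrating the floor-flip windows (of width proportional to $\epsh t$ at threshold $t$) against $\dhat$, so the term is recoverable, just not from where you placed it.
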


The sources of deviations are categorized into two types of errors: 
(1) \erra, which arises from errors in transformed rejection sampling caused by inaccuracies in the computation of $\lfloor\dinvhat(u)\rfloor$ in line~\ref{ln:dinvhat} of \Cref{alg:btrapprox}; and (2) \errb, which refers to errors in the computation of the rejection ratio, accumulating throughout lines~\ref{ln:fact1}, \ref{ln:fact2}, and \ref{ln:rejratio} of \Cref{alg:btrapprox}. 
We will bound the effects of these two errors independently and then combine them to get our main theorem.

\paragraph{Analysis of \erra:} %
The error \erra arises from inaccuracies in the inverse transform sampling, specifically due to deviations in the hat distribution $\dhat$ caused by basic arithmetic operations. The key challenge stems from the inverse sampling procedure evaluating $\dinvhat(u)$ for a uniform random sample $u$. Since evaluating $\dinvhat(u)$ requires basic arithmetic operations, the inverse sampling component may produce an incorrect output $k'$, different from the intended value $k \neq k'$ and thereby deviating $\dhat$ into a modified distribution $\widetilde{\dhat}$.

For $\beta > 2\log_2\lceil n\rceil$, we show that the possible values of 
$k'$ are limited to $\{k-1, k, k+1\}$. Through careful analysis of the 
probabilities of $k'$ being $k-1$ or $k+1$, we show that these probabilities are 
bounded by $\Oh(k\eps)$. This leads to the bound on $\widetilde{\dhat}(k)$ which is stated in the following lemma:

\begin{restatable}{lemma}{dhatdistlem}\label{lem:dhatdist}
    Suppose $\widetilde{\dhat}$ is the deviated version of the hat distribution $\dhat$ due to the error in the computation of $\dinvhat(u)$. Then, if the error is distributed uniformly over the range $[-\epsh, \epsh]$ and $\epsh \leq \frac{1}{2}$, then for any $k \in [n]$,
    $(1-\epsh(3k+1)) \dhat(k) \leq \widetilde{\dhat}(k) \leq (1+w_k\eps_\dchat (k+2)) \dhat(k)$
    where, $w_k = \max\left(\frac{\dhat(k-1)}{\dhat(k)}, \frac{\dhat(k+1)}{\dhat(k)}\right)$
\end{restatable}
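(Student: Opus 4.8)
The plan is to view $\widetilde{\dhat}$ as the distribution of the computed index $k' = \lfloor \widetilde{\dinvhat}(u)\rfloor$, where $u$ is uniform and $\widetilde{\dinvhat}(u)$ is the floating-point evaluation of $\dinvhat(u)$. Since $\dinvhat$ is assembled from a constant number of basic operations, I would model the induced perturbation as a relative error $\delta$ acting on $\dinvhat(u)$ that is uniform on $[-\epsh,\epsh]$, so that $\widetilde{\dhat}(k)=\Pr_{u,\delta}[\lfloor\dinvhat(u)(1+\delta)\rfloor=k]$; this is the modelling choice that produces the magnitude factor $k$ in the statement. First I would fix the regime $\beta\ge 2\lceil\log_2 n\rceil$: on the event $\lfloor\dinvhat(u)\rfloor=k$ we have $\dinvhat(u)\in[k,k+1)$, so the absolute perturbation is at most $\epsh(k+1)\le\epsh n$, which lies below one in this precision regime. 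Hence $\lfloor\widetilde{\dinvhat}(u)\rfloor$ differs from $\lfloor\dinvhat(u)\rfloor$ by at most one, recovering the structural fact from the preceding paragraph that only the three cells $k-1,k,k+1$ can exchange mass.

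Next I would set up a mass-transfer decomposition. Writing $R$ for the probability that $u$ lies in cell $k$ and remains there after perturbation, and $L^{\mathrm{in}}_{j},L^{\mathrm{out}}_{j}$ for the masses transferred between cell $j\in\{k-1,k+1\}$ and cell $k$, I have $\widetilde{\dhat}(k)=R+L^{\mathrm{in}}_{k-1}+L^{\mathrm{in}}_{k+1}$ and $\dhat(k)=R+L^{\mathrm{out}}_{k-1}+L^{\mathrm{out}}_{k+1}$. Subtracting gives $\widetilde{\dhat}(k)=\dhat(k)-(L^{\mathrm{out}}_{k-1}+L^{\mathrm{out}}_{k+1})+(L^{\mathrm{in}}_{k-1}+L^{\mathrm{in}}_{k+1})$. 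The lower bound then follows by discarding the non-negative leak-in terms and upper-bounding the leak-out, while the upper bound follows by discarding the non-negative leak-out terms and upper-bounding the leak-in.

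The heart of the argument is bounding these boundary leaks. Fixing a realisation $\delta$, the cell boundary located where $\dinvhat(u)=j$ is displaced, in the value variable, by $\bigl|j(\tfrac{1}{1+\delta}-1)\bigr|\le \tfrac{\epsh}{1-\epsh}\,j$, i.e. by a margin of width $\Theta(\epsh j)$ once scaled by the local hat density. The transferred mass equals the integral of the hat density over this margin, which I would bound by (margin width) times the density near the boundary. For the leak-out terms both boundaries ($j=k$ and $j=k+1$) belong to cell $k$, so the relevant density is controlled by $\dhat(k)$ itself; summing the left margin ($\sim\epsh k$) and the right margin ($\sim\epsh(k+1)$) and tracking the constants from $\epsh\le\tfrac12$ yields $L^{\mathrm{out}}_{k-1}+L^{\mathrm{out}}_{k+1}\le\epsh(3k+1)\dhat(k)$, giving the lower bound. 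For the leak-in terms the density lives on the neighbouring cells, so it is controlled by $\dhat(k-1)$ and $\dhat(k+1)$; replacing each by $w_k\dhat(k)$ via $w_k=\max(\dhat(k-1)/\dhat(k),\dhat(k+1)/\dhat(k))$ and again summing the two margins yields $L^{\mathrm{in}}_{k-1}+L^{\mathrm{in}}_{k+1}\le w_k\epsh(k+2)\dhat(k)$, giving the upper bound. Since $\widetilde{\dhat}(k)$ is the average of these per-$\delta$ estimates over $\delta\in[-\epsh,\epsh]$, the same inequalities survive the averaging.

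I expect the main obstacle to be the density-to-mass step: rigorously replacing the pointwise hat density at a cell boundary by the total discrete mass of an adjacent cell. This requires a regularity estimate for the hat density within a single cell (monotonicity or bounded variation, following from the explicit form $\dhat^{-1}(u)=\lambda_{n,p}/(1/2-|u|)^2+\mu_{n,p}$), and it is precisely here that the asymmetry between the symmetric coefficient $3k+1$ for leak-out and the $w_k$-weighted coefficient $k+2$ for leak-in arises, since the leak-out has no neighbouring cell whose mass can absorb the boundary density. Extracting the exact integer constants, rather than generic $\Oh(\epsh k)$ bounds, will demand a careful one-sided accounting of the two margins together with the $\epsh\le\tfrac12$ control on $\tfrac{1}{1+\delta}-1$.
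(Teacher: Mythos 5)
Your overall architecture matches the paper's proof: the same modeling of the floating-point error as a multiplicative perturbation uniform on $[-\epsh,\epsh]$, the same use of $\beta \geq 2\lceil\log_2 n\rceil$ to confine the computed index to $\{k-1,k,k+1\}$, and the same asymmetric accounting (retained mass lower-bounded for the $1-\epsh(3k+1)$ side, neighbour mass weighted by $w_k$ for the upper bound). Your leak-in/leak-out decomposition is just a rephrasing of the paper's conditional decomposition $\widetilde{\dhat}(k) = \sum_{t\in\{k-1,k,k+1\}}\Pr(k'=k \mid \lfloor\dinvhat(u)\rfloor = t)\,\dhat(t)$, and your averaging over $\delta$ corresponds to the paper's use of the symmetry of the uniform error (crossing probability at most $\tfrac12$ on the margin).

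However, there is a genuine gap at exactly the step you flag as the ``main obstacle,'' and your proposed resolution is not the one that makes the lemma go through. You plan to bound margin mass by (margin width) $\times$ (local hat density near the boundary) and then justify density-to-mass comparison via regularity (monotonicity or bounded variation) of the specific H\"ormann hat $\dhat^{-1}(u) = \lambda_{n,p}/(1/2-|u|)^2 + \mu_{n,p}$. This route fails on two counts. First, the lemma is stated for an \emph{arbitrary} hat distribution satisfying the transformed-rejection conditions (the paper explicitly claims independence from the choice of hat), so an argument tied to the explicit H\"ormann form would prove a strictly weaker statement. Second, without some assumption the claim is simply false at this level of generality: if the conditional density of $\dinvhat(u)$ within the cell $[k,k+1)$ concentrates near a boundary, the leak-out can exceed any fixed multiple of the cell mass $\dhat(k)$, so no bound of the form $\epsh(3k+1)\dhat(k)$ can be extracted from cell masses alone. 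The paper closes this hole not by a regularity estimate but by an additional modeling assumption made inside the proof: conditioned on $\lfloor\dinvhat(u)\rfloor = t$, the value $\dinvhat(u)$ is taken to be \emph{uniformly distributed} on $[t,t+1)$, so the conditional density is identically $1$ and margin mass equals margin width times $\dhat(t)$ exactly. With that assumption your constants do come out (the safe interval $\left[\frac{k}{1-\epsh}, \frac{k+1}{1+\epsh}\right]$ has length at least $1-\epsh(3k+1)$ for $\epsh\leq\tfrac12$, and the two incoming margins contribute at most $\frac{\epsh k}{3}\dhat(k-1) + \frac{\epsh(k+2)}{3}\dhat(k+1) \leq w_k\epsh(k+2)\dhat(k)$), but the missing ingredient in your write-up is this within-cell uniformity assumption, not a regularity lemma about the hat density.
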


\paragraph{Analysis of \errb:}
The error \errb stems from inaccuracies in the rejection ratio computation, which are influenced by factorial approximations, basic arithmetic operations, and \(\log\) computation approximations. Specifically, the rejection ratio error originates from three primary sources: (1) errors introduced by factorial approximations, (2) errors and approximations in basic arithmetic operations and \(\log\) computation during the rejection ratio computation. Both of these errors contribute to the error in the rejection ratio computation. The computed rejection ratio in line~\ref{ln:rejratio} of \Cref{alg:btrapprox} is denoted as $\widetilde{r}_k$, specifically, $\widetilde{r}_k = \frac{\sampdist(k)}{\alpha \dhat(k)}$.

Since Python's standard library uses the Lanczos approximation for \(\log\) computations, we bound the relative error of factorial approximation using the corresponding error bound of the Lanczos approximation (see \cref{sec:factapprox} for details). 
To bound \errb, we also need to understand the relative error in basic arithmetic operations and additive error bounds for logarithm approximations (see \cref{sec:mparithmetic} for details).
By combining these error bounds, we establish a relative error bound for the rejection ratio, formalized in the following lemma.

\begin{restatable}{lemma}{rejratiolem} \label{lem:rejratio}
	Let $r_k$ denote the actual rejection ratio, defined as 
	$r_k = \frac{\bin(k)}{\alpha \dhat(k)}$, and let $\widetilde{r}_k$ 
	denote the computed rejection ratio, defined as 
	$\widetilde{r}_k = \frac{\sampdist(k)}{\alpha \dhat(k)}$, for any $k$ in 
	$[n]$. Then, for any $k \in [0,n]$, we have:
	\[\left|\frac{\widetilde{r}_k}{r_k}-1\right| 
	\leq (1110n + 2540)\beta\eps + 14\eps\log\left( \dhat(k)\right) + 15\zeta + \sOh(\eps)\]
\end{restatable}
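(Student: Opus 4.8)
The plan is to reduce the multiplicative claim to an additive bound on the error of the computed log‑rejection‑ratio $l_v$, and then to account for every numerical error that enters line~\ref{ln:rejratio} of \Cref{alg:btrapprox}. First I would observe that, since the hat sampling is held exact for this lemma, the computed and the ideal rejection ratios share the same denominator $\alpha\dhat(k)$, so that $\frac{\widetilde r_k}{r_k} = \frac{\sampdist(k)}{\bin(k)} = e^{\Delta}$, where $\Delta := l_v - \log r_k$ is the additive deviation of the quantity assembled in line~\ref{ln:rejratio} from the true $\log r_k = \log n! - \log k! - \log(n-k)! + k\log p + (n-k)\log(1-p) - \log\alpha - \log\dhat(k)$. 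The acceptance test $\clog(v)\le l_v$ in line~\ref{ln:comprej} replaces the threshold $e^{l_v}$ by a $(1\pm\Oh(\tau))$ factor, since $\clog$ approximates $\log$ with additive error $\tau$ by \cref{eq:approxlog}; this contributes an extra $\Oh(\tau)=\Oh(\beta\eps)$ to $\Delta$. Because every error term I collect will be $\Oh(\zeta)$ or $\Oh(n\beta\eps)$, hence small as $\eps,\zeta\to 0$, I can finish with the elementary estimate $|e^\Delta-1|\le|\Delta|+\sOh(\eps)$, so it suffices to bound $|\Delta|$.

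Second, I would decompose $\Delta$ by its three independent origins: (i) the Lanczos replacement of each exact factorial, (ii) the AGM logarithm $\clog$ replacing each exact $\log$, and (iii) the rounding errors of the basic operations $\fplus,\fminus,\ftimes$ that combine the terms. For (i), \cref{eq:lanczerr} gives $\lancz(k)/k!\in[1-\zeta,1+\zeta]$, so each factorial incurs $|\log\lancz(k)-\log k!|\le -\log(1-\zeta)=\zeta+\Oh(\zeta^2)$; summing the three such terms, together with the internal evaluation of the Lanczos polynomial $A_{t,g}$ inside $\logfact$, yields the $15\zeta$ contribution. For (ii), each $\clog$ call carries additive error at most $\tau=178\beta\eps$; the calls $\clog(p),\clog(1-p)$ are multiplied by $k$ and $n-k\le n$, so they contribute $\Oh(n\beta\eps)$, while $\clog(\alpha)$, the $\clog$ inside each $\logfact$, and $\clog(\dhat^{-1}(u))$ contribute $\Oh(\beta\eps)$. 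For (iii), I would apply \cref{eq:basicops} to each multiplication $k\ftimes\clog(p)$ and $(k+\tfrac12)\ftimes\clog(k+g+\tfrac12)$, whose relative error $\eps$ is scaled by the summand magnitude, and \cref{eq:basicops-sum} to the final summation of the $m=\Oh(1)$ terms in line~\ref{ln:rejratio}.

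The crucial quantitative point, which I would handle with care, is the \emph{magnitude} of the summands entering \cref{eq:basicops-sum}. The log‑factorials satisfy $|\log k!|=\Oh(k\log k)=\Oh(n\log n)$, and $|k\log p|\le n|\log p|$; using the precision hypothesis $\beta\ge 2\lceil\log_2 n\rceil$ and $\beta\ge\lceil-\log_2 p\rceil$ I can bound $\log n\le\beta$ and $|\log p|\le\beta$, so that $\sum_i|x_i|=\Oh(n\beta)$ even though the raw magnitudes look like $n\log n$. Feeding this into \cref{eq:basicops-sum} with a constant number of summands turns the arithmetic assembly error into $\Oh(n\beta\eps)$, while the summand $\clog(\dhat^{-1}(u))$, of magnitude $\approx|\log\dhat(k)|$, produces exactly the $14\eps\log\dhat(k)$ term. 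Collecting (i)--(iii) together with the $\Oh(\tau)$ acceptance‑step term and bookkeeping the constants gives $|\Delta|\le (1110n+2540)\beta\eps + 14\eps\log\dhat(k) + 15\zeta + \sOh(\eps)$, which is the claim after $|e^\Delta-1|\le|\Delta|+\sOh(\eps)$.

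The main obstacle I anticipate is not any single inequality but the faithful propagation of errors through \cref{eq:basicops-sum}: one must simultaneously track the additive $\clog$ errors and the multiplicative rounding errors, keep the summand magnitudes controlled by the precision assumption so that the anticipated $n\log n$ blow‑up collapses to $n\beta$, and verify that no cross term between the $\zeta$ and $\eps$ errors escapes the $\sOh(\eps)$ remainder. Pinning down the exact integer constants $1110$ and $2540$ is the most tedious part, as it requires a line‑by‑line tally of how many times each $\clog$ and each basic operation is applied along the path from lines~\ref{ln:fact1} and \ref{ln:fact2} to \ref{ln:rejratio}.
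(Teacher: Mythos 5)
Your overall strategy---work additively in log space, tally the Lanczos, $\clog$, and rounding contributions to the computed log-ratio, collapse the $n\log n$ summand magnitudes to $n\beta$ via the precision hypothesis, extract $14\eps\log\dhat(k)$ from the $\clog(\dhat^{-1}(u))$ summand, and finish with $|e^\Delta-1|\le|\Delta|+\sOh(\eps)$---is essentially the same as the paper's proof of \cref{lem:rejratiointer} (the paper uses $e^x\le 1+2x$ at the exponentiation step and otherwise follows your outline, including the structural split you anticipate). The difference is that the paper factors the lemma through an \emph{intermediate distribution} $\interdist$, proving the arithmetic/$\clog$ part (\cref{lem:rejratiointer}) and the factorial part (\cref{lem:lancz}) separately, whereas you fold everything into one $\Delta$.

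The genuine gap is your accounting of the $15\zeta$ term. Three Lanczos-approximated factorials contribute only about $3\zeta$ in log space, since each satisfies $\left|\log\lancz(m)-\log m!\right|\le -\log(1-\zeta)=\zeta+\Oh(\zeta^2)$ by \cref{eq:lanczerr}; and the internal evaluation of $A_{t,g}$ inside $\logfact$ produces rounding error of type $\eps$ (it is already absorbed in the $\clog(A_{t,g}(k))$ term of \cref{alg:logfact}), not error of type $\zeta$. So the mechanism you name cannot yield $15\zeta$. In the paper, the factor $15$ comes from a step your proposal never performs: the approximated mass $\barbin(k)=\frac{\lancz(n)}{\lancz(k)\lancz(n-k)}p^k(1-p)^{n-k}$ does \emph{not} sum to $1$, and since the lemma defines $\widetilde r_k$ through the normalized distribution $\sampdist$, one must divide by $\sum_i\barbin(i)$, whose own deviation composes with the pointwise one to give $\left(\tfrac{1+\zeta}{1-\zeta}\right)^3\le(1+3\zeta)^3\le 1+15\zeta$ (\cref{lem:lancz}). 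Your identity $\widetilde r_k/r_k=e^\Delta$, with $\Delta$ taken to be only the error of the quantity assembled in line~\ref{ln:rejratio}, silently equates the computed acceptance probability with the normalized ratio $\sampdist(k)/\bin(k)$; that equation is false precisely by the normalization factor, and this is where the discrepancy between your $\approx 3\zeta$ tally and the lemma's $15\zeta$ hides. (Since $3\zeta\le 15\zeta$, a fully rigorous version of your unnormalized accounting would still imply the stated \emph{inequality} for the raw acceptance ratio, but it would not prove the lemma as stated in terms of $\sampdist$ without addressing normalization, which is the content the paper isolates in \cref{lem:lancz}.)
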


By combining the error bounds from both the rejection ratio computation and hat distribution deviation, we obtain the final bound on the statistical distance, as stated in Theorem~\ref{theo:stirl}.

\subsection{Detailed Technical Analysis}
We start by proving the main theorem of our paper using \Cref{lem:dhatdist} and \Cref{lem:rejratio}. 

\begin{proof}[of \cref{theo:stirl}]
    Without loss of generality, assume $\dhat(-1) = \dhat(n+1) = \bin(-1) = \bin(n+1) = 0$ and $r_{-1} = r_{n+1} = 1$.
    Let us define the event $\accept$ as the event such that a sample $k$ is sampled by the sampler. In \sampler{} a sample drawn from $\widetilde{\dhat}$ is accepted with probability $\widetilde{r}_k$.
    Therefore, the acceptance probability is given by, 
    $\Pr(\accept) = \sum_{k=0}^{n} \Pr(\accept | k) \Pr(k)= \sum_{k=0}^{n} \widetilde{r}_k      \widetilde{\dhat}(k)$. Substituing the lower bounds for $\widetilde{r}_k$, $\widetilde{\dhat}(k)$ from \Cref{lem:rejratio} and \Cref{lem:dhatdist} we get: 
    \begin{align*}
        \Pr(\accept) 
        &\geq \sum_{k=0}^{n} (1 - (1110n+2540)\beta\eps- 14\eps\log\left( \dhat^{-1}(u)\right) -15 \zeta -\sOh(\eps)) \\
        &\hspace{18em}(1 - (3k+1)\epsh) r_k h(k) \\
        &\geq \sum_{k=0}^{n} (1 - (1110n+2540)\beta\eps - 14\eps\log\left( \dhat^{-1}(u)\right) -15\zeta -\sOh(\eps)) \\
        &\hspace{18em}(1 - (3k+1)\epsh) \frac{\bin(k)}{\alpha} 
    \end{align*}
    To complete the lower bound we first observe that,
    \[
    \sum_{k=0}^{n} \log(\dhat^{-1}(u)) \bin(k)
    = \sum_{k=0}^{n} \log\left(\frac{\bin(k)}{\dhat(k)}\right) \bin(k) 
    - \sum_{k=0}^{n} \log(\bin(k)) \bin(k).
    \]
    Given that $\frac{\bin(k)}{\dhat(k)} \leq \alpha$ for all $k$, and that the entropy of Binomial distribution satisfies
    $\E[-\log(\bin(k))] \leq \frac{1}{2} \log_2(2\pi e np(1-p))$,
    we conclude that
    \[
    \sum_{k=0}^{n} \log(\dhat^{-1}(u)) \bin(k)
    \leq \alpha + \frac{1}{2} \log_2(2\pi e np(1-p)).
    \]
    Thus we can lower bound acceptance probability $\Pr(\accept)$ as follows,
    \begin{align*}
    \Pr(\accept)
    &\geq \left(1 - (1110n + 2554)\beta\eps - 14\alpha\eps + \log\left(\dhat^{-1}(u)\right) 
    - 15\zeta \right. \\
    &\hspace{15em} \left. - (3np + 1)\epsh - \sOh(\eps)\right) \cdot \frac{1}{\alpha}
    \end{align*}
    Therefore, the probability of observing a point $k$ under the sampler’s output distribution is given by $\sampdist(k) = \Pr(k | \accept)$. Applying Bayes' rule we get, $\sampdist(k) = \frac{\Pr(\accept | k) \Pr(k)}{\Pr(\accept)} = \frac{\widetilde{r_k} \widetilde{\dhat}(k)}{\Pr(\accept)}$. By using the upper bound of $\widetilde{r}_k$, from \Cref{lem:rejratio},
    \begin{align*}
        \sampdist(k) &\leq \frac{1 + (1110n+2540)\beta\eps +14\eps\log\left( \dhat^{-1}(u)\right) + 15\zeta +\sOh(\eps)}{1 - (1110n+2554)\beta\eps -15\zeta -(3np+1)\epsh - \sOh(\eps)} 
        \cdot \alpha r_k \widetilde{\dhat}(k) \\
        &\leq (1 + (2220n+5080)\beta\eps +28\eps\log\left( \dhat^{-1}(u)\right) + 30\zeta \\
        &\hspace{13em} + (6np+2)\epsh +\sOh(\eps)) \cdot \frac{\bin(k)\widetilde{\dhat}(k)}{\dhat(k)}  
    \end{align*}
    The last inequality follows from assuming $(1110n+2540)\beta\eps +(3np+1)\epsh +14\eps\log\left( \dhat^{-1}(u)\right) + 15\zeta +\sOh(\eps) \leq \frac{1}{2}$.
    Next we bound the ratio $\frac{\widetilde{\dhat}(k)}{\dhat(k)}$ using \Cref{lem:dhatdist},
    \begin{align*}
        \sampdist(k) &\leq \Bigl(1+ (2220n+5080)\beta\eps +28\eps\log\left( \dhat^{-1}(u)\right)+ 30\zeta \\
        &\hspace{6em}+ (6np+2)\epsh +\sOh(\eps)\Bigr) \cdot (1 + w_k\eps_\dchat (k+2)) \cdot \bin(k)\\
        &\leq \Bigl(1+ (2220n+5080)\beta\eps +28\eps\log\left( \dhat^{-1}(u)\right)+ 30\zeta + (6np+2)\epsh +  \\
        &\qquad\qquad\qquad\qquad\qquad\qquad\qquad\qquad w_k\eps_\dchat (k+2) + \sOh(\eps)\Bigr) \cdot \bin(k)
    \end{align*}
    Where $w_k = \max\left(\frac{\dhat(k-1)}{\dhat(k)}, \frac{\dhat(k+1)}{\dhat(k)}\right)$.
    Since $r_k \leq 1$, it follows that $\dhat(k) \geq \frac{\bin(k)}{\alpha}$ and $\dhat(k+1) = \frac{\bin(k+1)}{\alpha r_{k+1}}$. This implies $\frac{\dhat(k+1)}{\dhat(k)} \leq \frac{\bin(k+1)}{\bin(k) r_{k+1}}$. Similarly, we have $\frac{\dhat(k-1)}{\dhat(k)} \leq \frac{\bin(k-1)}{\bin(k) r_{k-1}}$. Combining these, $w_k$ can be upper bounded as follows, 
    \begin{align*}
        w_k &= \max\left(\frac{\dhat(k-1)}{\dhat(k)}, \frac{\dhat(k+1)}{\dhat(k)}\right) \leq \max\left(\frac{\bin(k-1)}{\bin(k)r_{k-1}}, \frac{\bin(k+1)}{\bin(k)r_{k+1}}\right)    
    \end{align*}
    This yields the following bound: $w_k \bin(k) \leq \max\left(\frac{\bin(k-1)}{r_{k-1}}, \frac{\bin(k+1)}{r_{k+1}}\right)$.
    Thus, $w_k \bin(k) \leq \max\left(\dhat(k-1), \dhat(k+1)\right) \leq \dhat(k-1)+ \dhat(k+1)$.
    Using this bound, we can upper bound the sum $\sum_{k=0}^{n} w_k\eps_\dchat (k+2) \bin(k)$ as follows, 
    \begin{align*}
        \sum_{k=0}^{n} w_k\eps_\dchat (k+2) \bin(k) &\leq \alpha\epsh \sum_{k=0}^{n} (k+2) \left(\dhat(k-1)+ \dhat(k+1)\right)\\
        &\leq 2\alpha\epsh(n+2)
    \end{align*}
    The last inequality follows from the fact that $k \leq n$, $\sum_{i=0}^{n-1} \dhat(i) \leq 1$ and $\sum_{i=1}^{n} \dhat(i) \leq 1$.  
    Therefore, the statistical distance between the sampler's distribution and the Binomial distribution is given by,
    \begin{align*}
        &\dtv(\sampdist, \bin) = \frac{1}{2} \sum_{k=0}^{n} \left|\sampdist(k) - \bin(k)\right|\\
        \leq& \frac{1}{2} \sum_{k=0}^{n} \left|(2220n+5080)\beta\eps +28\eps\log\left( \dhat^{-1}(u)\right) + 30\zeta + (6np+2)\epsh \right.\\
        &\hspace{15em} \left. + 2\alpha\epsh(n+2) + \sOh(\eps) \right| \cdot \bin(k) \\
        \leq& (1110n+2554)\beta\eps + 14\alpha\eps + 15\zeta + (3np + 1)\epsh + \alpha\epsh(n+2) + \sOh(\eps)\\
        =& (1110 \beta + 3cp + c + \alpha c) n\eps + 15\zeta + \sOh(\eps)
    \end{align*} 
    This completes the proof. \qed
\end{proof}

\subsubsection{Error in Inverse Transform Sampling}

In this subsection, we analyze the error introduced during the inverse transform sampling process which arises from the arithmetic operations involved in evaluating the inverse hat function $\dinvhat(u)$. Instead of assuming a specific hat distribution, we argue that for any hat distribution, the multiplicative error introduced during the computation of $\dinvhat(u)$ is multiplicatively bounded by $\epsh:= c\eps$, where $c>0$ depends on the number of basic arithmetic operations\footnote{For most practical hat functions, we typically have $c \leq 100$, which implies that $\epsh \leq \frac{1}{2}$ when the precision parameter $\beta > 8$.} 
$\ast$ used in $\dinvhat$ with $\ast \in \{+, -, \times, /, \sqrt{}\}$. 
Consequently, the computed value of $\dinvhat(u)$ in \cref{ln:dinvhat} of \sampler{} falls within the range $[(1 - \epsh) \dinvhat(u), (1 + \epsh) \dinvhat(u)]$. 
\crdy{To model the impact of the error we assume it to be uniformly distributed over the range $[-\epsh, \epsh]$. 
While a Gaussian distribution might seem like a natural choice for such errors, the uniform distribution offers a more conservative approach. 
Among all zero-mean Gaussian like distributions with bounded support, the uniform distribution has the heaviest tails within its support. This characteristic makes it suitable for handling errors in the worst-case scenario.
In particular, if the mean of the error is centered at $\dinvhat(u)$, then assuming a uniform distribution for the error allows us to upper bound deviations in $\dhat(k)$ more conservatively. This ensures that our bounds remain valid even under pessimistic error assumptions.}
We restate the Lemma~\ref{lem:dhatdist} for completness and provide its proof in the full version.

{\dhatdistlem*}

\subsubsection{Error in Rejection Ratio Computation}
In this subsection, we will analyze the impact of factorial approximations, basic arithmetic operations, and $\log$ computation approximations on the rejection sampling process. Before proceeding further, we introduce the notation $\interdist$, which we refer to it as the \textit{intermediate} distribution, where only the factorial computations are approximated. The key result of this subsection is \Cref{lem:rejratio}, whose proof builds on two auxiliary lemmas: one addressing errors from arithmetic and $\log$ approximations, and another handling factorial approximation. Proofs are deferred to the full version due to space constraints.

\begin{restatable}{lemma}{rejratiointer}\label{lem:rejratiointer}
	Let $\rinter$ denotes the ratio 
	$\rinter = \frac{\interdist(k)}{\alpha \dhat(k)}$, and let $\widetilde{r}_k$ 
	denote the computed rejection ratio, defined as 
	$\widetilde{r}_k = \frac{\sampdist(k)}{\alpha \dhat(k)}$, for any $k$ in 
	$[n]$. Then, for all $k \in [n]$,
	$\left| \frac{\widetilde{r}_k}{\rinter} - 1\right| 
	\leq (1110n + 2540)\beta\eps + 14\eps\log\left( \dhat(k)\right) + \sOh(\eps)$  
\end{restatable}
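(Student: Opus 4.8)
The plan is to carry out the whole argument in log-space, since line~\ref{ln:rejratio} of \sampler{} computes exactly the quantity $l_v$, which is the approximate value of $\log\widetilde r_k$, whereas the intermediate ratio satisfies
$\log\rinter = \log\interdist(k) - \log\alpha - \log\dhat(k)$
computed with \emph{exact} logarithms and \emph{exact} arithmetic but using the same Lanczos-approximated factorials. The key structural observation is that $\widetilde r_k$ and $\rinter$ share the identical Lanczos approximation of every factorial, so the factorial (i.e.\ $\zeta$) error cancels entirely; what remains to be controlled is precisely the error contributed by the approximate logarithm $\clog$ (\cref{eq:approxlog}) and the approximate basic operations $\fplus,\fminus,\ftimes$ (\cref{eq:basicops,eq:basicops-sum}). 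I would therefore first bound the additive log-space discrepancy $\theta := \left|l_v - \log\rinter\right|$, and then exponentiate via $\widetilde r_k/\rinter = e^{\,l_v - \log\rinter}$ together with $\left|e^{\theta}-1\right| \le \theta + \sOh(\eps)$, which turns the additive bound on $\theta$ into the stated multiplicative bound and produces the $\sOh(\eps)$ remainder.

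To bound $\theta$ I would decompose $l_v - \log\rinter$ summand by summand, matching each term of line~\ref{ln:rejratio} against the corresponding exact term of $\log\rinter$. For each of the three factorial summands $l_n,l_k,l_{nk}$ I compare $\logfact(j)$ (see the definition in \cref{alg:logfact}) against the exact logarithm $\log\lancz(j)$ of the Lanczos value: here every $\clog$ evaluation injects the additive error $\tau = 178\beta\eps$ of \cref{eq:approxlog}, the product $(j+\tfrac12)\ftimes\clog(j+g+\tfrac12)$ adds a relative error $\eps$ (\cref{eq:basicops}) on a quantity of size $\Theta(j\log j)$, and the inner sum is handled by \cref{eq:basicops-sum}. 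The two probability summands $k\ftimes\clog(p)$ and $(n-k)\ftimes\clog(1-p)$ are treated identically, picking up $\clog$-errors scaled by $k$ and $n-k$ plus a relative $\eps$ on quantities of size $\left|k\log p\right|$ and $\left|(n-k)\log(1-p)\right|$, while $\clog(\dhat^{-1}(u))$ and $\clog(\alpha)$ each contribute one $\tau$. Finally the outer summation of the (at most seven) terms is controlled by \cref{eq:basicops-sum}, whose error is proportional to the number of summands times $\sum_i\left|x_i\right|$; since one of these summands has magnitude $\left|\log\dhat^{-1}(u)\right|$, this is exactly where the explicit $14\eps\log(\dhat(k))$ term originates, and it cannot be absorbed elsewhere because $\dhat(k)$ is not controlled by the precision parameter.

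The device that collapses every remaining contribution into the single leading term $(1110n+2540)\beta\eps$ is the precision hypothesis $\beta \ge \max\!\left(2\lceil\log_2 n\rceil,\ \lceil-\log_2 p\rceil\right)$. This guarantees $\log j \le \log n = \Oh(\beta)$ for all $j\le n$ and $\left|\log p\right| = \Oh(\beta)$, so that each stray factor of the form $\eps\cdot j\log j$ or $\eps\cdot k\left|\log p\right|$ produced by the relative-error bounds is in fact $\Oh(n\beta\eps)$ rather than $\Oh(n\log n\cdot\eps)$, and may be absorbed. Summing the three log-factorial contributions and the two probability contributions, each a constant multiple of $n\beta\eps$, yields the advertised constant. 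I would then collect everything into $\theta \le (1110n+2540)\beta\eps + 14\eps\log(\dhat^{-1}(u)) + \sOh(\eps)$ and exponentiate as above, keeping the $\log(\dhat^{-1}(u))$ term symbolic (it is discharged later, via the entropy bound, when summing over $k$ in the proof of \cref{theo:stirl}).

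I expect the main obstacle to be bookkeeping rather than conceptual. The delicate point is that the intermediate log-factorials are \emph{large}, of size $\Theta(n\log n)$, so the relative-error inflations of \cref{eq:basicops,eq:basicops-sum} act on big quantities; verifying that each such inflation is genuinely $\Oh(n\beta\eps)$ under the precision hypothesis, and that the accumulated constant does not exceed $1110$, is where the calculation must be executed carefully. A secondary subtlety is ensuring that $\theta$ is small enough (comfortably below $1$) for the exponentiation step to relegate the quadratic remainder into $\sOh(\eps)$ rather than leaving an uncontrolled higher-order term; this again relies on the precision assumption together with $k\in[n]$.
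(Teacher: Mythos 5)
Your proposal follows essentially the same route as the paper's proof: it likewise bounds the additive log-space discrepancy between $l_v$ and $\log \rinter$ term by term (a $\tau = 178\beta\eps$ error per $\clog$ evaluation, exactly as in the paper's auxiliary bound $|\logfact(j)-\log\lancz(j)|\le (j+2)\tau$, relative $\eps$ errors on the large summands via \cref{eq:basicops,eq:basicops-sum}, and absorption of all $\log$ factors into $\beta$ via the precision hypothesis), notes correctly that the Lanczos error cancels against $\interdist$, and then exponentiates with $\theta$ small to obtain the multiplicative bound with an $\sOh(\eps)$ remainder. The only detail you elide is the relative error $c\eps$ incurred in computing $\dhat^{-1}(u)$ itself before the $\clog$ call, which the paper tracks as a $\log(1+c\eps)$ term and absorbs into the leading constant --- a bookkeeping point, not a gap.
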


\begin{restatable}{lemma}{sampdistlem}\label{lem:lancz}
    For all $k \in [n]$, 
    $
    (1-15\zeta) \bin(k) \leq \interdist(k) \leq (1+15\zeta) \bin(k)
    $,
    where \(\zeta\) denotes the uniform error bound due to the Lanczos approximation and $\interdist$ denotes the intermediate distribution. 
\end{restatable}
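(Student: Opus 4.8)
The plan is to treat \cref{lem:lancz} purely as a relative-error propagation through the three factorials appearing in the binomial mass, since the intermediate distribution $\interdist$ differs from $\bin$ only in that each exact factorial is replaced by its Lanczos surrogate. Concretely, starting from $\bin(k) = \frac{n!}{k!\,(n-k)!}\,p^{k}(1-p)^{n-k}$, the (unnormalised) intermediate mass is proportional to $\frac{\lancz(n)}{\lancz(k)\,\lancz(n-k)}\,p^{k}(1-p)^{n-k}$; the factors $p^{k}$ and $(1-p)^{n-k}$ are computed without factorial approximation, so the entire deviation is captured by the ratio of the three Lanczos terms to the true factorials.

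First I would invoke the uniform relative bound \eqref{eq:lanczerr} and write $\lancz(m) = m!\,(1+\theta_m)$ with $|\theta_m| \le \zeta$ for each $m \in \{n, k, n-k\}$. The ratio of the unnormalised intermediate mass to $\bin(k)$ is then $\frac{1+\theta_n}{(1+\theta_k)(1+\theta_{n-k})}$, which I would bound uniformly in $k$ by $\frac{1+\zeta}{(1-\zeta)^{2}}$ from above and $\frac{1-\zeta}{(1+\zeta)^{2}}$ from below using $|\theta_m| \le \zeta$. Crucially this bound depends on neither $k$, $n$, nor $p$.

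The second step is normalisation: because $\interdist$ is a probability distribution, I would divide the point-mass ratio by $Z = \sum_{j} \frac{\lancz(n)}{\lancz(j)\,\lancz(n-j)}\,p^{j}(1-p)^{n-j}$. Summing the per-term bounds against $\sum_{j}\bin(j) = 1$ shows $Z \in \left[\frac{1-\zeta}{(1+\zeta)^{2}},\, \frac{1+\zeta}{(1-\zeta)^{2}}\right]$, and dividing yields the clean two-sided estimate $\left(\frac{1-\zeta}{1+\zeta}\right)^{3} \le \frac{\interdist(k)}{\bin(k)} \le \left(\frac{1+\zeta}{1-\zeta}\right)^{3}$, again uniform in $k$. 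I would finish with the elementary inequality $\left(\frac{1+\zeta}{1-\zeta}\right)^{3} \le 1 + 15\zeta$ together with its lower counterpart $\left(\frac{1-\zeta}{1+\zeta}\right)^{3} \ge 1 - 15\zeta$, which hold for all sufficiently small $\zeta$.

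The only delicate point — and the reason for the generous constant $15$ rather than the first-order coefficient $6$ — is absorbing the higher-order terms in $\zeta$ into a single clean constant while keeping the statement free of $n$ and $p$. I expect this to be routine: I would verify the threshold below which $\left(\frac{1+\zeta}{1-\zeta}\right)^{3} \le 1 + 15\zeta$ holds (a short one-variable calculus check), and observe that it is comfortably satisfied because the Lanczos error $\zeta$ is minuscule for the fixed parameters $t, g$ used in practice. No case analysis on $k$ is needed, since the per-factorial relative bound $\zeta$ is itself uniform over $\N$.
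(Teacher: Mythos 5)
Your proposal is correct and follows essentially the same route as the paper's proof: a uniform per-factorial relative bound $\lancz(m)=m!(1+\theta_m)$ with $|\theta_m|\leq\zeta$, a two-sided bound on the normalising sum against $\sum_j \bin(j)=1$, the resulting cube $\bigl(\tfrac{1+\zeta}{1-\zeta}\bigr)^{3}$ ratio, and an elementary absorption into $1\pm 15\zeta$ for small $\zeta$ (the paper routes this last step through $\tfrac{1+\zeta}{1-\zeta}\leq 1+3\zeta$ and $(1\pm 3\zeta)^3$, a cosmetic difference only). Your observation that the smallness condition on $\zeta$ needs a one-line verification applies equally to the paper's own proof, which also implicitly assumes $\zeta$ well below the threshold.
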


For completeness, we restate the \Cref{lem:rejratio} whose proof follows by combining \Cref{lem:rejratiointer} and \Cref{lem:lancz}.

\rejratiolem*
\section{Case Study: DNF Counting}\label{sec:casestudy}

This case study demonstrates how our proposed bounds on the statistical distance between the sampler and the Binomial distribution can be easily integrated into practical tools. To demonstrate the applicability of these bounds, we utilize them in conjunction with an off-the-shelf Binomial sampler to implement the DNF Counting algorithm \aps{}~\cite{meel2021estimating}.

A DNF formula is a disjunction of conjunctions of literals, where each conjunction (clause) represents a set of conditions. For example, \((x_1 \land \neg x_2) \lor (x_2 \land \neg x_3)\) is a DNF formula with clauses like \((x_1 \land \neg x_2)\). A DNF formula \(\varphi := \varphi_1 \lor \ldots \lor \varphi_m\) has \(m\) clauses, and the number of its solutions is denoted as \(|sol(\varphi)|\).
The problem of counting the $|sol(\varphi)|$ for a DNF formula $\varphi$ is \#P-hard. To address this challenge, various Fully Polynomial-Time Randomized Approximation Schemes (FPRAS) have been developed~\cite{karp1983monte,karp1989monte,chakraborty2016algorithmic}. Given a DNF formula $\varphi$ and tolerance and confidence parameters $\eps, \delta\in [0,1]$, these FPRAS return $\widehat{n} \in \left[(1-\eps)|sol(\varphi)|, (1+\eps)|sol(\varphi)|\right]$ with probability at least $(1-\delta)$. 

The most recent progress in sampling-based DNF counting FPRAS is embodied by the algorithm \aps{}~\cite{meel2021estimating}. Given a DNF formula $\varphi$, the \aps{} returns an $\eps$ multiplicative approximation of $|sol(\varphi)|$ with high probability.
The algorithm maintains a bucket $\cX$ to keep sampled solutions from DNF clauses and, also, a probability parameter $p$ such that, for any solution $\sigma$ of $\varphi$, $\sigma$ belongs to $\cX$ with probability $p$. To achieve this goal, the algorithm removes all the elements $\sigma$ from bucket $\cX$ if $\sigma$ satisfies $\varphi_i$. The algorithm next samples new solutions from $\varphi_i$. To determine the number of solutions, the \aps{} asks for a sample $N_i$ from Binomial distribution $\mathsf{b}_{|sol(\varphi_i)|, p}$ and adds $N_i$ many new satisfying assignments of $\varphi_i$ to $\cX$. If the bucket overflows, the algorithm keeps on removing elements uniformly from the bucket until the bucket size falls under the threshold. The end goal of \aps{} is to output the ratio $\frac{|\cX|}{p}$ which is a good estimate of $|sol(\varphi)|$.

Since \aps{} heavily relies on the Binomial sampler, the theoretical guarantees of \aps{} are contingent on the quality of the Binomial sampler. 
This case study illustrates how our results allow users to maintain the theoretical guarantees of {\aps}.
Computing bounds on $\dtv$ between the Binomial distribution and the sampler, users can adjust the confidence parameter $\delta$ in \aps{} to account for errors from the underlying Binomial sampler, thereby ensuring correctness with theoretical guarantees.

\begin{algorithm}[htb]
	\SetAlgoLined
	\DontPrintSemicolon

	\HiLi $\delta_1 \gets \kappa\delta$\;
	\HiLi  $\delta_2 \gets (1-\kappa)\delta$

	Initialize $T \gets \left(\frac{\log(4/\delta_2) + \log m}{\eps^2}\right)$\; Initialize $p\gets 1$, $\cX \gets \emptyset$\;
	$\delta' \gets 0$\;
	\For{$i=1$ to $m$}{
		\For{all $\sigma \in \mathcal{X}$}{
			\If{$\sigma \vDash \varphi_i$} {
				remove $\sigma$ from $\mathcal{X}$\;
			}
		}
		$N_i \gets \sampler(|sol(\varphi_i)|, p)$ \label{ln:samp}\;
		\HiLi $\delta' \gets \delta' + \delta^i_{|sol(\varphi_i)|, p}$\;
		\HiLi   \If{$\delta' > \delta_1$}{
			\HiLi         \Return \fail{}
		}
		Add $N_i$ distinct random solutions of $\phi_i$ to $\cX$\;
		\While{$|\cX| > T$}{
			$p=p / 2$\;
			Throw away each element of $\mathcal{X}$ with probability $\frac{1}{2}$\;
		}
	}
	Output $\frac{|\cX|}{p}$\;
	\caption{\apsnew$(\varphi, \eps, \delta, \kappa)$ \newline \small{(The modifications from \aps{} to \apsnew{} are highlighted)}}
	\label{alg:apsnew}
\end{algorithm}

\noindent \textbf{ Algorithm Modification}
We demonstrate how easily the \aps{} algorithm can be modified to incorporate our statistical distance bounds.
We denote this modified version as \apsnew{}, detailed in \cref*{alg:apsnew}, with highlighted modifications. The primary difference between \aps{} and \apsnew{} lies in handling the confidence parameter \( \delta \) to account for the errors due to the underlying binomial sampler.  Therefore, \apsnew{} takes another parameter \( \kappa \in [0,1] \)  to adjust the error budget for the sampler, such that, $\kappa \delta$ is the error budget for the sampler and $(1-\kappa)\delta$ is the error budget for the algorithm. If the accumulated error due to the sampler exceeds the error budget, \apsnew{} halts immediately and returns \fail{}. The user can restart the algorithm using a larger value of \( \kappa \) to accommodate an increased error budget.

\noindent \textbf{Analysis Modification}
We now illustrate how simply the analysis of \aps{} can be modified to \apsnew{}. 
Recall that we are concerned with the event: $\frac{|\mathcal{X}|}{p} \notin (1 \pm \eps)|sol(\varphi)|$, which we will refer to as $\mathsf{Bad}$. 
Note that, $\delta_1 = \kappa \delta$ is the error budget for the sampler and $\delta_2 = (1-\kappa)\delta$ is the error budget for the algorithm. 
By the correctness guarantee of \aps{}, $\Pr_{\bin}(\mathsf{Bad}) \leq \delta_2$.
During the execution of \apsnew{}, if the algorithm invokes the sampler $t$ times with parameters $n_i,p_i$, then from \Cref{lem:indistinguishable}, we have 
$$\Pr_{\sampdist}(\bad) \leq \Pr_\bin(\bad) + \sum_{i = 1}^t\dtv(\bb_{n_i, p_i}, \bb^{\mathsf{BinSamp}}_{n_i, p_i})$$
Suppose during the execution, the computed $\dtv$ bounds using \Cref{theo:stirl} are given by $\delta^1_{n_1, p_1}, \delta^2_{n_2, p_2}, \ldots, \delta^t_{n_t, p_t}$ such that $\delta' = \sum_{i = 1}^{t}\delta^i_{n_i, p_i}$. Therefore, if \apsnew{} does not halt then $\Pr_{\sampdist}(\bad) \leq \Pr_\bin(\bad) + \sum_{i = 1}^t\dtv(\bin, \sampdist) \leq \delta_2 + \delta' \leq \delta_2 + \delta_1 \leq \delta$. Thus, by the correctness of \aps{} the count returned by \apsnew{} is within the $\eps$ error bound. If \apsnew{} halts and returns \fail{}, this implies that the error budget of the sampler has been exceeded.

\def\HiLi{\leavevmode\rlap{\hbox to \hsize{\color{gray!20}\leaders\hrule height .8\baselineskip depth .5ex\hfill}}}

\subsection{Experimental Setup and Evaluation Results}
We conducted accuracy experiments following the methodology of \cite{soosengineering} to assess the reliability of the counts returned by \apsnew{}. Specifically, we compared the counts from \apsnew{} with those from \ganak{}~\cite{sharma2019ganak}, an exact counting tool.
We used a comprehensive benchmark suite from \cite{soosengineering} to evaluate the performance and accuracy of the algorithms. \crdy{This suite consists of DNF formulas with the number of variables ranging from 100 to 700 and clause counts ranging from 30 to 700. Following prior work \cite{soosengineering,meel2019not,soos2019bird}, we adopted the standard settings for the tolerance parameter ($\eps$ = 0.8) and the confidence parameter (\(\delta\) = 0.36), which are commonly used in both model counting competitions and practical applications.} Finally, we set \(\kappa\) to 0.5.

\begin{figure}[tb]
	\centering
	\includegraphics[scale=0.4]{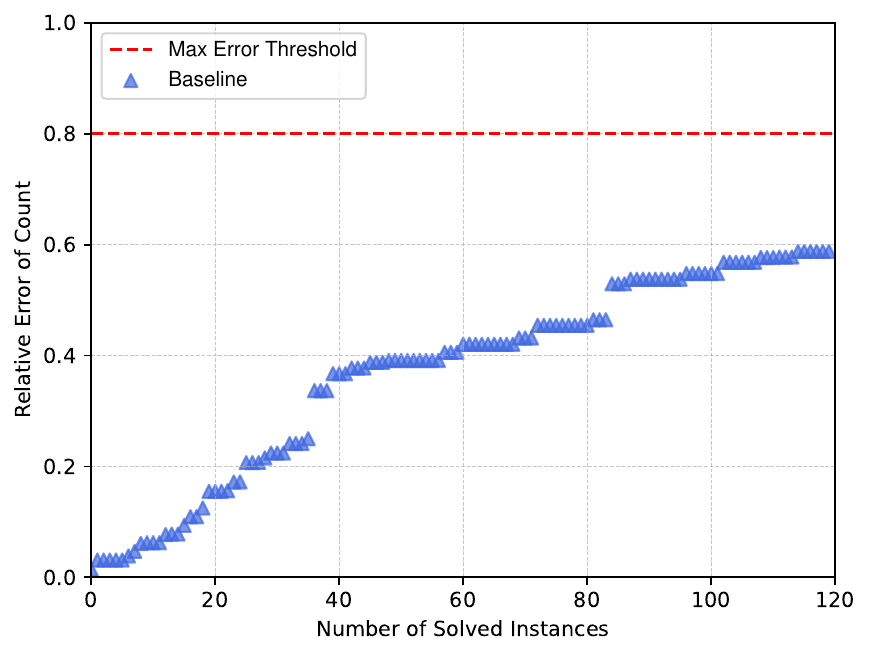}
	\caption{Accuracy experiment results for \apsnew{}. The red line represents the tolerance factor (\(\eps = 0.8\)).}
	\label{fig:apsresultacc}
\end{figure}

\paragraph{Summary of Results}
Our observations reveal that \apsnew{} delivers counts that are nearly as precise as the exact counts obtained from \ganak{}, as demonstrated in \cref{fig:apsresultacc}. 
The \(y\)-axis of \cref{fig:apsresultacc} represents the relative error of the counts, with the tolerance parameter (\(\eps = 0.8\)) marked by a red straight line, while the \(x\)-axis represents the instances. We observed that for all instances, \apsnew{} computed counts within the tolerance, demonstrating high accuracy. 

\crdy{\Cref{fig:apsresulterr} presents the reported errors $\delta'$ of the Binomial sampler during the execution of \apsnew{}. The left plot presents the reported errors for individual instances, while the right plot groups the average error by the number of clauses in the DNF formula.  Across our benchmark suite, the errors remain around \(10^{-6}\). The error increases with the number of clauses as the number of calls to the Binomial sampler increases.  Notably, we observe a 10-fold increase in error when the number of clauses grows from 30 to 700.}

The results suggest that \apsnew{} is capable of providing highly reliable approximations of counts across a diverse range of DNF instances, maintaining accuracy within the predefined error margin despite its simple design. The integration of our statistical distance bounds into the algorithm allows users to effectively manage the error budget, ensuring that the algorithm remains robust and reliable even in the presence of approximation errors from the underlying Binomial sampler.

\begin{figure}[tb]
	\centering
	\includegraphics[scale=0.4]{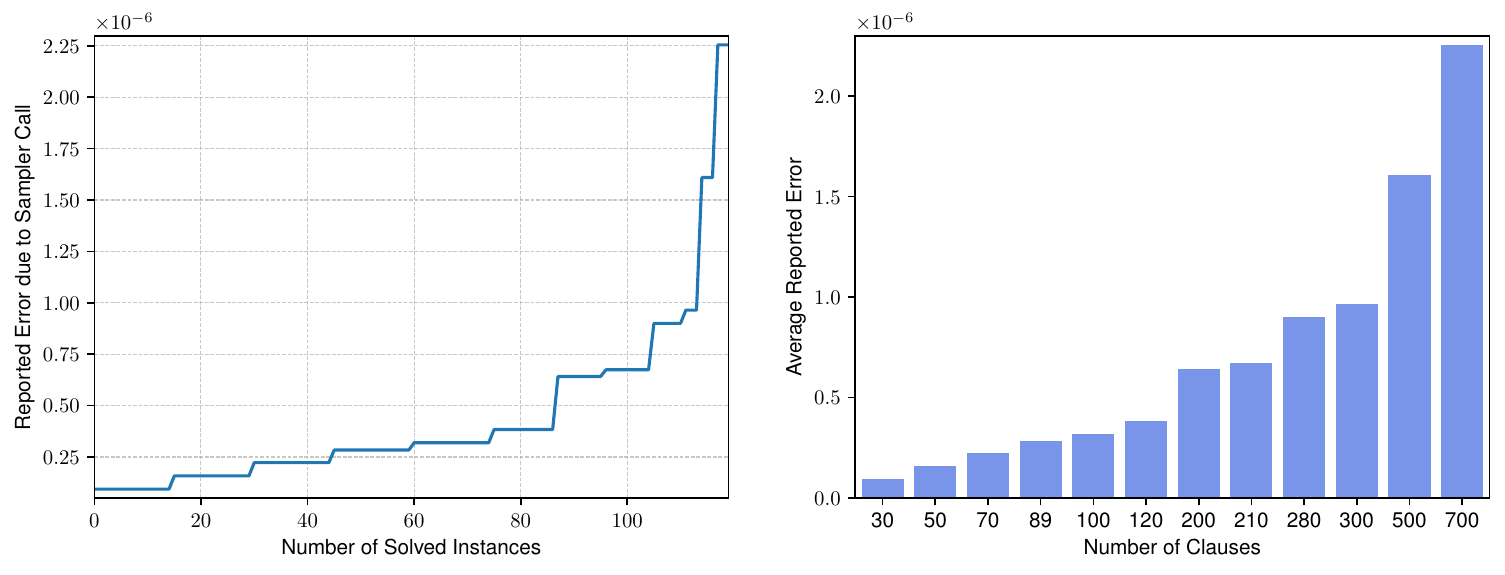}
	\caption{Left: Reported errors $(\delta')$ by \apsnew{} for individual instances. Right: Average error grouped by the number of clauses in the DNF formula, showing a clear upward trend as formula size increases.}
	\label{fig:apsresulterr}
\end{figure}

\section{Conclusion}
\label{sec:conclusion}
In this work, we first identified the sources of deviation in the practical implementations of standard binomial samplers. We observed that exact sampling from distributions is infeasible in practice due to high runtime overhead; thus, implementations inevitably introduce deviations. Accordingly, we proposed the usage of statistical distance as the quality metric owing to its ability to allow end users to obtain sound bounds on the {\em bad} events. We also presented a case study demonstrating the minimal effort required by system designers to incorporate the reported deviation bounds into their systems. 

\paragraph*{Limitations and Future Work} While our current work establishes a foundational framework, there are several limitations and opportunities for future enhancement.
The current analysis relies on several simplifying assumptions—for instance, uniformity in the error distribution—which may not hold in more general settings. Additionally, the reported bounds are not yet tight and can be refined for greater accuracy, making this an important avenue for further research. Moreover, the principles of quality measurement can be extended to samplers for other distributions. Developing a general framework for error reporting across various types of samplers would be a valuable contribution to the field. Finally, proposing efficient sampling scheme that can achieve the desired statistical distance with minimal overhead is an open problem that warrants further investigation.

\subsubsection*{\ackname}
Part of this research is supported by DCSW-TAC Project ACMU-24/VPP (SC). 
Meel acknowledges the support of the Natural Sciences and Engineering Research Council of Canada (NSERC), funding reference number RGPIN-2024-05956. 
Sarkar acknowledges the support of Google PhD Fellowship. 
Part of the research was conducted while Sarkar was at the University of Toronto. 
Computations were performed on the Niagara supercomputer at the SciNet HPC Consortium. SciNet is funded by Innovation, Science and Economic Development Canada; the Digital Research Alliance of Canada; the Ontario Research Fund: Research Excellence; and the University of Toronto. 

\subsubsection*{\discintname}
The authors declare that they have no competing interests.

\newpage

\appendix

\newpage

\section*{Appendix}

\section{Extended Proofs}

\subsection*{Proof of \Cref{lem:rejratiointer}}
To prove \Cref{lem:rejratiointer}, we begin by analyzing how $\log n!$ is approximated. In this context, we refer to \logfact{} (\Cref{alg:logfact}), which presents an approximation of $\log n!$.
The following lemma provides a bound on the error introduced by using $\clog$ within one \logfact{} call. 

\begin{lemma} \label{lem:logfact}
    The additive error introduced by using $\clog$ in a single \logfact{} call can be bounded by $(k+2)\tau$. Specifically,
    \begin{align*}
    	 \log(\lancz(k)) - (k+2)\tau \leq \logfact(k) \leq \log(\lancz(k)) + (k+2)\tau
    \end{align*}
\end{lemma}
\begin{proof}
	Given $k$, $\logfact(k)$ computes and returns
     $\frac{1}{2} \clog{}{}(2\pi) + \left(k + \frac{1}{2}\right)\clog{}{} (k + g + \frac{1}{2}) - \left(k + g + \frac{1}{2}\right) + \clog \left(A_{t,g}(k)\right)$. 
    Therefore, from \cref*{eq:approxlog} we have the following inequality,
    \begin{align*}
        &\frac{1}{2} \clog{}{}(2\pi) + \left(k + \frac{1}{2}\right)\clog{}{} \left(k + g + \frac{1}{2}\right) - \left(k + g + \frac{1}{2}\right) + \clog \left(A_{t,g}(k)\right)\\
        &\leq \frac{1}{2} \log{}{}(2\pi) + \frac{1}{2}\tau + \left(k + \frac{1}{2}\right)\log{}{} \left(k + g + \frac{1}{2}\right) + \left(k + \frac{1}{2}\right)\tau \\
        & \qquad - \left(k + g + \frac{1}{2}\right) + \log \left(A_{t,g}(k)\right) + \tau\\
        &\leq \log (\lancz(k!)) + (k+2)\tau 
    \end{align*}
    Similarly, $\frac{1}{2} \clog{}{}(2\pi) + \left(k + \frac{1}{2}\right)\clog{}{} (k + g + \frac{1}{2}) - \left(k + g + \frac{1}{2}\right) + \clog \left(A_{t,g}(k)\right) \geq \log (\lancz(k!)) - (k+2)\tau$. \qed
\end{proof}

\rejratiointer*
\begin{proof}
    \sampler{} uses the hat distribution $\dhat$.
    We use $\widetilde{lr}_k$ to denote the log of the computed rejection ratio $\widetilde{r}_k$, which is given by,
	\begin{equation*}\label{eq:clogfact}
        \widetilde{lr}_k := l_n \fminus l_k \fminus l_{nk} \fplus k \ftimes \clog(p) \fplus (n-k) \ftimes \clog(1-p) \fplus \clog(\dhat^{-1}(u))
    \end{equation*}
    We assume that the computation of $\dhat^{-1}(u)$ has constant number $c>0$ of arithmetic operations. Therefore the multiplicative error introduced in the computation of $\dhat^{-1}(u)$ is bounded by $c\eps$. Therefore, the computed value of $\clog \left(\dhat^{-1}(u)\right)$ is bounded by $\clog \left((1\pm c\eps) . \dhat^{-1}(u)\right)$. Consequently, using \cref{eq:approxlog},
    \begin{equation}
        \label{eq:log:hu}
        \clog \left((1+ c\eps) . \dhat^{-1}(u)\right) \leq \log\left(\dhat^{-1}(u)\right) + \log(1+ c\eps) + \tau    
    \end{equation}
    Similarly, considering the error in the term $k\ftimes\clog{}(p)$, we deduce using \cref{eq:basicops}, \cref{eq:approxlog} that,
    \begin{equation}
        \label{eq:log:nkpq}
        k \ftimes \clog{}(p) \leq k \log(p) + \eps |k \log(p)| + k \tau + k \tau\eps
    \end{equation}
    Similarly, we can bound the term $(n-k) \ftimes \clog{}(1-p)$ as follows,
    \begin{equation}
        \label{eq:log:nk1p}
        (n-k) \ftimes \clog{}(1-p) \leq (n-k) \log(1-p) + \eps |(n-k) \log(1-p)| + (n-k) \tau + (n-k) \tau\eps
    \end{equation}
    Lastly, we are interested in the error in the computation of $l_n - l_k - l_{nk}$. Using \cref{lem:logfact}, we can bound the error as follows:
    \begin{align}
        \label{eq:log:lklknk}
        l_n - l_k - l_{nk} &\leq \log \lancz(n) - \log \lancz(k) - \log \lancz(n-k) \notag \\
        &\qquad + (n+2)\tau + (k+2)\tau + (n-k+2)\tau
    \end{align}
    Thus, we bound the following sum using \cref{eq:log:nkpq}, \cref{eq:log:nk1p} and \cref{eq:log:lklknk}:
    \begin{align}
        \label{eq:terribl1}
        &l_n - l_k - l_{nk} + k \ftimes \clog(p) + (n-k) \ftimes \clog(1-p) \notag\\
    \leq& \log \lancz(n) - \log \lancz(k) - \log \lancz(n-k)  + k \ftimes \clog(p) \notag\\
        &\qquad + (n-k) \ftimes \clog(1-p) + (n+2)\tau + (k+2)\tau + (n-k+2)\tau \notag\\
        =& \log \lancz(n) - \log \lancz(k) - \log \lancz(n-k)  + k\log p + \eps|k\log p| \notag\\
        &\qquad + (n-k)\log (1-p) + \eps|(n-k)\log (1-p)|  +  (3n + 6)\tau + n\tau\eps
    \end{align}
    In a similar way $l_n + l_k + l_{nk} + k \ftimes \clog(p) + (n-k) \ftimes \clog(1-p)$ is at most,
    \begin{align}
        \label{eq:terribl2}
        & \log \lancz(n) + \log \lancz(k) + \log \lancz(n-k) + k \ftimes \clog(p) \notag\\
        &\qquad + (n-k) \ftimes \clog(1-p) + (n+2)\tau + (k+2)\tau + (n-k+2)\tau \notag\\
    \leq& n\log n + k\log k + (n-k)\log(n-k) + k\log p + \eps|k\log p| \notag\\
        &\qquad + + (n-k)\log (1-p) + \eps|(n-k)\log (1-p)| + (3n +6)\tau + n\tau\eps
    \end{align}
    The last inequality follows from the fact that $\log\lancz(n) \leq n\log n$. 
    Next we accumulate all the errors due to basic arithmetic computations in the computation of $\widetilde{lr}_k$. We first bound the compound sum using \cref{eq:basicops-sum}:
    \begin{align*}
        & l_n \fminus l_k \fminus l_{nk} \fplus k \ftimes \clog(p) \fplus (n-k) \ftimes \clog(1-p) \fplus \clog\left( \dhat^{-1}(u)\right)\\
        \leq & l_n - l_k - l_{nk} + k \ftimes \clog(p) + (n-k) \ftimes \clog(1-p) + \clog\left( \dhat^{-1}(u)\right)\\
        &\qquad + 6\eps \left(l_n + l_k + l_{nk} + k \ftimes \clog(p) + (n-k) \ftimes \clog(1-p) + \clog\left( \dhat^{-1}(u)\right)\right)
    \end{align*}
    Using the corresponding bounds from \cref{eq:terribl1,eq:terribl2} we can have,
    \begin{align*}
        & l_n \fminus l_k \fminus l_{nk} \fplus k \ftimes \clog(p) \fplus (n-k) \ftimes \clog(1-p) \fplus \clog\left( \dhat^{-1}(u)\right)\\
        \leq & \log \lancz(n) - \log \lancz(k) - \log \lancz(n-k) \\
        &\qquad + k \log(p) + (n-k) \log(1-p) + \log\left( \dhat^{-1}(u)\right)\\
        &\qquad + 7\eps \left[n\log n + k\log k + (n-k)\log(n-k) + k |\log(p)| \right.\\
        &\qquad \left. + (n-k) |\log(1-p)| + \log\left( \dhat^{-1}(u)\right)\right] + (3n+7)\tau + \log(1+c\eps)  + \Oh(n\tau\eps)
    \end{align*}
    Thus accumulating all the errors, and denoting $\log(\rinter)$ by $lr_k$, we can bound the error in the computation of $\widetilde{lr}_k$ using the following inequalities:
    \begin{align*}
        \widetilde{lr}_k &\leq lr_k + (3n + 7)\tau + 7\eps \left[n\log n + k\log k + (n-k)\log(n-k) \right.\\
        & \left. + k |\log(p)| + (n-k) |\log(1-p)| + \log\left( \dhat^{-1}(u)\right)\right] + \log (1+c\eps) + \Oh(n\tau\eps)\\
        &\leq lr_k + (3n + 7)\tau + 21n\beta\eps + 7\eps \log\left( \dhat^{-1}(u)\right) + \log (1+c\eps) + \Oh(n\tau\eps)
    \end{align*}
    The second inequality follows from $\log n, \log k, \log(n-k), |\log p|, |\log (1-p)|< \beta$ (due to our choice of $\beta$).
    Using a similar set of inequalities, we can bound the error in the computation of $\widetilde{lr}_k$ from below.
    \begin{align*}
        \widetilde{lr}_k &\geq lr_k - (3n +7)\tau - 21n\beta\eps -7\eps\log\left( \dhat^{-1}(u)\right) - \log (1-c\eps) - \Oh(n\tau\eps)
    \end{align*}
    Therefore, considering $\widetilde{r}_k = e^{\widetilde{lr}_k}$, and using inequalities $e^x \leq  1+2x, e^{-x} \geq  1-2x$, we get the following bound on the obtained rejection ratio:
    \begin{align*}
        \widetilde{r}_k &\leq r_k \cdot (1+c\eps) \cdot e^{(3n + 7)\tau + 21n\beta\eps +7\eps\log\left( \dhat^{-1}(u)\right)+ \Oh(n\tau\eps)}\\
        &\leq r_k (1+(6n + 14)\tau + 42n\beta\eps + 14\eps\log\left( \dhat^{-1}(u)\right) + c\eps + \Oh(n\tau\eps))\\
        &\leq r_k (1+178(6n + 14)\beta\eps + 42n\beta\eps + 14\eps\log\left( \dhat^{-1}(u)\right)+ c\eps + \Oh(n\eps^2))\\
        &= r_k (1+ (1110n + 2540)\beta\eps + 14\eps\log\left( \dhat^{-1}(u)\right) + \Oh(n\eps^2))
    \end{align*}
    The last inequality follows from substituing the value of $\tau$.
    Similarly, we have, $\widetilde{r}_k \geq r_k (1 - (1110n + 2540)\beta\eps - 14\eps\log\left( \dhat^{-1}(u)\right) - \Oh(n\eps^2))$. \qed
\end{proof}

\subsection*{Proof of \Cref{lem:lancz}}

\sampdistlem*
\begin{proof}
    We begin by denoting the approximated probability mass, as calculated by \sampler{}, with $\barbin(k)$ for all $k \in [n]$, such that, 
    $$\barbin(k) = \frac{\lancz(n)}{\lancz(k) \cdot \lancz(n-k)} \cdot p^k q^{n-k}$$
    But, here $\barbin$ is not necessarily a well defined distribution, since $\sum_{k=0}^n \barbin{k}$ is not necessarily $1$. We normalize $\barbin$ to a distribution and obtain its upper and lower bounds. From \cref{eq:lanczerr} we have $(1-\zeta) k! \leq \lancz(k) \leq (1+\zeta) k!$ for all $k \in [n]$. Therefore,
    $$
    \begin{aligned}
        \sum_{k = 0}^n \bin(k) \cdot \frac{1-\zeta}{(1+\zeta)^2} \leq \sum_{k = 0}^n \barbin(k) &\leq \sum_{k = 0}^n \bin(k) \cdot \frac{1+\zeta}{(1-\zeta)^2} 
    \end{aligned}
    $$
    Therefore for all $k \in [n]$ we have the following sets of inequalities:
    $$
    \begin{aligned}
    \frac{(1-\zeta)}{(1+\zeta)^2} \bin(k) \cdot \frac{(1-\zeta)^2}{(1+\zeta)} &\leq \frac{\barbin(k)}{\sum_{i = 0}^n \barbin(i)} &&\leq \frac{(1+\zeta)}{(1-\zeta)^2} \bin(k) \cdot \frac{(1+\zeta)^2}{(1-\zeta)}\\
    \implies \qquad \frac{(1-\zeta)^3}{(1+\zeta)^3} \bin(k) &\leq \frac{\barbin(k)}{\sum_{i = 0}^n \barbin(i)} &&\leq \frac{(1+\zeta)^3}{(1-\zeta)^3} \bin(k)\\
    \implies \qquad (1-3\zeta)^3 \bin(k) &\leq \frac{\barbin(k)}{\sum_{i = 0}^n \barbin(i)} &&\leq (1+3\zeta)^3 \bin(k) \\
    \implies \qquad (1-15\zeta) \bin(k) &\leq \frac{\barbin(k)}{\sum_{i = 0}^n \barbin(i)} &&\leq (1+15\zeta) \bin(k) 
    \end{aligned}
    $$

    The third and the last inequalities follow due to the fact that for $\zeta < 1/3$, $\frac{1-\zeta}{1+\zeta} > 1-3\zeta$, $(1-3\zeta)^3 > 1-15\zeta$ and $(1+3\zeta)^3 < 1+15\zeta$.

    Since, $\interdist(k) = \frac{\barbin(k)}{\sum_{i = 0}^n \barbin(i)}$, the result follows directly. \qed
\end{proof}

\subsection*{Proof of \Cref{lem:rejratio} using \Cref{lem:rejratiointer} and \Cref{lem:lancz}}
\rejratiolem*
\begin{proof}[of \Cref{lem:rejratio}]
    Now we complete the proof of \Cref{lem:rejratio}.
    The rejection ratio $r_k$ is defined as $\frac{\bin(k)}{\alpha \dhat(k)}$. Using \Cref{lem:rejratiointer} and \Cref{lem:lancz}, we can derive the following bound on the rejection ratio: $\widetilde{r}_k = \frac{\interdist(k)}{\alpha \dhat(k)} \cdot \frac{\bin(k)}{\interdist(k)} \leq (1 + (1110n + 2540)\beta\eps + 14\eps\log\left( \dhat(k)\right) + 15\zeta + \sOh(\eps))$. Similarly we can lower bound $\widetilde{r}_k$ by $(1 - (1110n + 2540)\beta\eps - 14\eps\log\left( \dhat(k)\right) - 15\zeta - \sOh(\eps))$.
    \qed
\end{proof}

\subsection*{Proof of \Cref{lem:dhatdist}}
{\dhatdistlem*}

\begin{proof}
Without loss of generality, let us assume that $\dhat(-1) = \dhat(n+1) = 0$. 
Let us also assume that the ideal output of $\lfloor\dinvhat(u)\rfloor$ is $k$, and let $k'$ be the value of $\lfloor{\dinvhat}(u)\rfloor$ as computed in \Cref{alg:btrapprox}. Note that $k' \in (1\pm \epsh) \dinvhat(u)$. Now, because, $\beta \geq 2 \lceil\log_2 n\rceil$, we have $\epsh\dinvhat(u) \leq c\eps n \leq 1$, and hence, $k' \in \{k-1, k, k+1\}$. Since for all $t \in [n]$,  $\Pr\limits_{u}\left(\lfloor\dinvhat(u)\rfloor = t\right) = \dhat(t)$, 
\begin{align*}
    \widetilde{\dhat}(k) &= \sum_{t \in \{k-1, k, k+1\}}\Pr_{\eps,u}\left(k' = k \mid \lfloor\dinvhat(u)\rfloor = t\right) \cdot \Pr_u\left(\lfloor\dinvhat(u)\rfloor = t\right)\\
    &= \sum_{t \in \{k-1, k, k+1\}}\Pr_{\eps,u}\left(k' = k \mid \lfloor\dinvhat(u)\rfloor = t\right) \cdot \dhat(t)
\end{align*}
Observe that, given $\lfloor\dinvhat(u)\rfloor = \crdy{t}$, $k' = \crdy{t}+1$ is possible only when $(1+\epsh)\dinvhat(u) \geq \crdy{t}+1$, that is, $\dinvhat(u) \geq \frac{\crdy{t}+1}{(1+\epsh)}$.
Therefore, given $\lfloor\dinvhat(u)\rfloor = \crdy{t}$, and assuming $\dinvhat(u)$ is uniformly distributed over the range $[\crdy{t}, \crdy{t}+1)$,
we have, 
\begin{align*}
    &\Pr_{\eps,u}\left(k' = \crdy{t}+1 \mid \lfloor\dinvhat(u)\rfloor = \crdy{t}\right) \\
    =& \int_{v=\crdy{t}}^{v=\crdy{t}+1}\Pr_{\eps}\left(k' = \crdy{t}+1 \mid \lfloor\dinvhat(u)\rfloor = \crdy{t} , \dinvhat(u) = v\right) \cdot \\
    &\qquad\qquad\qquad\qquad\qquad f\left(\dinvhat(u) =v \mid \lfloor\dinvhat(u)\rfloor = \crdy{t}\right) \,dv\\
    =&  \int_{v=\frac{\crdy{t}+1}{1+\epsh}}^{v=\crdy{t}+1}\Pr_{\eps}\left(k' = \crdy{t}+1 \mid \lfloor\dinvhat(u)\rfloor = \crdy{t} , \dinvhat(u) = v\right) \,dv
\end{align*}
where $f$ is the uniform probability density function.
The last equality follows from the fact that, we have $\Pr_{\eps}\left(k' = \crdy{t}+1 \mid \lfloor\dinvhat(u)\rfloor = t , \dinvhat(u) = v\right) = 0$ for $v \in \left[\frac{\crdy{t}}{1-\epsh}, \frac{\crdy{t}+1}{1+\epsh}\right]$. 

Since the error in the computation of $\dinvhat(u)$ is uniformly distributed over the range $[-\epsh, \epsh]$, therefore, 
$\Pr_{\eps}\left(k' = \crdy{t}+1 \mid \lfloor\dinvhat(u)\rfloor = \crdy{t} , \dinvhat(u) = v\right) \leq 1/2$ for all $v \in \left[\frac{\crdy{t}+1}{1+\epsh}, \crdy{t}+1\right]$. 
Thus, we have 
$\Pr\limits_{\eps,u}\left(k' = \crdy{t}+1 \mid \lfloor\dinvhat(u)\rfloor = \crdy{t}\right) \leq \int_{v=\frac{\crdy{t}+1}{1+\epsh}}^{v=\crdy{t}+1} \frac{1}{2} \, dv = \frac{\epsh(\crdy{t}+1)}{2(1+\epsh)} \leq \frac{\epsh(\crdy{t}+1)}{3}$.
Similarly, we can upper bound the probability of $k'$ being $\crdy{t}-1$ given $\lfloor\dinvhat(u)\rfloor = \crdy{t}$, by $\Pr\limits_{\eps,u}\left(k' = \crdy{t}-1 \mid \lfloor\dinvhat(u)\rfloor = \crdy{t}\right) \leq \frac{\epsh (\crdy{t}+1)}{3}$.
Therefore, combining, we can upper bound the probability mass:
\begin{align*}
    \label{eq:hub}
    \widetilde{\dhat}(k) &\leq \frac{\epsh k}{3}  \dhat(k-1) + \dhat(k) + \frac{\epsh (k+2)}{3}  \dhat(k+1) \notag \leq \left(1+w_k\epsh (k+2)\right) \dhat(k)
\end{align*}
Next, observing that $\dinvhat(u) \in \left[\frac{k}{1-\epsh}, \frac{k+1}{1+\epsh}\right]$ ensures no possibility of error, we derive a lower bound for the probability of $k'=k$ given $\lfloor\dinvhat(u)\rfloor = k$ as 
\begin{align*}
    \Pr_{\eps,u}\left(k' = k \mid \lfloor\dinvhat(u)\rfloor = k\right) 
    &= \int_{v=k}^{v=k+1}\Pr_{\eps,u}\left(k' = k \mid \lfloor\dinvhat(u)\rfloor = k , \dinvhat(u) = v\right) \cdot\\
    &\qquad\qquad\qquad\quad f\left(\dinvhat(u) =v \mid \lfloor\dinvhat(u)\rfloor = k\right) \, dv\\
    \geq  \int_{v=\frac{k}{1-\epsh}}^{v=\frac{k+1}{1+\epsh}}&\Pr_{\eps,u}\left(k' = k \mid \lfloor\dinvhat(u)\rfloor = k , \dinvhat(u) = v\right) \, dv
\end{align*}
Since for all $v \in \left[\frac{k}{1-\epsh}, \frac{k+1}{1+\epsh}\right]$, $\Pr_{\eps,u}\left[k' = k \mid \lfloor\dinvhat(u)\rfloor = k , \dinvhat(u) = v\right] = 1$,
\begin{align*}
    \Pr_{\eps,u}\left(k' = k \mid \lfloor\dinvhat(u)\rfloor = k\right) \geq \frac{k+1}{(1+\epsh)} - \frac{k}{(1-\epsh)} \geq (1-\epsh(3k+1))
\end{align*}
Where the last inequality follows from the fact that $\frac{1}{1+\epsh} \geq 1 - \epsh$ and $\frac{1}{1-\epsh} \leq 1 + 2\epsh$ for $\epsh \leq \frac{1}{2}$. Consequently, $ \widetilde{\dhat}(k)$
 is at least 
 $\Pr_{\eps,u}\left(k' = k \mid \lfloor\dinvhat(u)\rfloor = k\right) \cdot \dhat(k) \notag = (1-\epsh(3k+1)) \dhat(k) $.
This completes the proof. \qed
\end{proof}

\end{document}